\pdfoutput=1

\documentclass[11pt]{article}
\usepackage[sort]{cite}
\usepackage[letterpaper,margin=1in]{geometry}
\usepackage{amsmath,amsthm}
\usepackage{algorithm}
\usepackage[hidelinks]{hyperref}
\usepackage[capitalize,nameinlink]{cleveref}
\theoremstyle{plain}
\newtheorem{theorem}{Theorem}[section]
\newcommand{\newsiamthm}[2]{\newtheorem{#1}[theorem]{#2}}
\newsiamthm{lemma}{Lemma}
\newsiamthm{corollary}{Corollary}
\newsiamthm{proposition}{Proposition}
\newsiamthm{definition}{Definition}
\numberwithin{equation}{section}
\numberwithin{table}{section}
\numberwithin{figure}{section}
\numberwithin{algorithm}{section}
\usepackage[T1]{fontenc}

\usepackage{soul}
\usepackage{color}
\usepackage{graphicx}
\usepackage{amsfonts}
\usepackage{amssymb}
\usepackage{tikz}
\usepackage{tabularx}
\usetikzlibrary{calc}
\usepackage[noend]{algpseudocode}
\usepackage{xcolor}
\definecolor{c0}{HTML}{641a80}
\definecolor{c1}{HTML}{b73779}
\usepackage{placeins}
\usepackage{float}%
\usepackage{comment}
\usepackage{xspace}
\usepackage{enumitem}

\DeclareMathOperator*{\argmax}{argmax}

\newcommand{\OPT}{\ensuremath{\mathsf{OPT}\xspace}}
\newcommand{\opt}{\ensuremath{\mathsf{opt}\xspace}}

\newsiamthm{problem}{Problem}
\newsiamthm{observation}{Observation}
\newsiamthm{claim}{Claim}

\makeatletter
\newtheorem*{rep@theorem}{\protect\rep@title}
\newcommand{\newreptheorem}[2]{%
\newenvironment{rep#1}[1]{%
 \def\rep@title{#2 \ref{##1}}%
 \begin{rep@theorem}}%
 {\end{rep@theorem}}}
\makeatother
\newreptheorem{theorem}{Theorem}

\newcommand{\defn}[1]{\textbf{\emph{#1}}}

\algnewcommand\algorithmicforeach{\textbf{for each}}
\algdef{S}[FOR]{ForEach}[1]{\algorithmicforeach\ #1\ \algorithmicdo}
\algnewcommand\algorithmicnot{\textbf{not}}
\algdef{S}[IF]{IfNot}[1]{\algorithmicif\ \algorithmicnot\ #1\ \algorithmicthen}

\newcommand{\E}{\mathbb{E}}

\newcommand{\MemberOf}{\ensuremath{\mathrm{MemberOf}}\xspace}
\newcommand{\FreqOf}{\ensuremath{\mathrm{FreqOf}}\xspace}
\newcommand{\SetOf}{\ensuremath{\mathrm{SetOf}}\xspace}

\newcommand{\algrule}[1][.2pt]{\par\vskip.2\baselineskip\hrule height #1\par\vskip.2\baselineskip}

  \usepackage{nth}
  \usepackage{intcalc}

  \newcommand{\cAAAI}[1]{AAAI\ Conference\ on\ Artificial (AAAI)}

\crefname{observation}{Observation}{Observations}
\Crefname{observation}{Observation}{Observations}

\begin{document}

\title{Fast Sparse $\alpha$-Navigable Graph Construction with Improved Resource Augmentation}
\date{}
\author{
    Alex Conway\thanks{Cornell University (\url{me@ajhconway.com}, \url{btl46@cornell.edu})} \and
    Laxman Dhulipala\thanks{University of Maryland (\url{laxman@umd.edu}, \url{rwen1@umd.edu})}\and
    Martin Farach-Colton\thanks{New York University (\url{mlf9579@nyu.edu}, \url{cmusco@nyu.edu}, \url{ys6378@nyu.edu}, \url{torsten.suel@nyu.edu})}\and
    Ben Landrum\footnotemark[1]\and
    Christopher Musco\footnotemark[3]\and
    Yarin Shechter\footnotemark[3]\and
    Torsten Suel\footnotemark[3]\and
    Richard Wen\footnotemark[2] 
}

\maketitle

\begin{abstract}
Graph-based nearest neighbor search methods have surged in popularity in recent years. These methods are based on running greedy search on a directed graph constructed over a set of $n$ points, $x_1, \ldots, x_n$, endowed with some distance function. A common goal is to construct a search graph that has the \emph{$\alpha$-navigability} property, which ensures that greedy search returns provably good approximate nearest neighbors and converges in a small number of iterations. As the parameter $\alpha \geq 1$ increases, accuracy improves, as does convergence rate, but there is a tradeoff: higher values of $\alpha$ require the graph to have more edges, which increases both the data structure's space and search time. This tradeoff has motivated the study of algorithms for constructing a \emph{sparsest $\alpha$-navigable graph} for a given set of points. 

In this work, we present improved algorithms for constructing $\alpha$-navigable graphs. For any point set and any distance function, we show how to compute an $\alpha$-navigable graph $G$ with a number of edges no more than $O(\log n)$ times that of the sparsest $(1+\varepsilon)\alpha$-navigable graph. The algorithm runs in $O(n^\omega \log(\Delta)/\varepsilon)$ time, where $\omega < 2.372$ is the exponent of fast matrix multiplication and $\Delta$ is the ratio between the largest and smallest pairwise distances in the input set. This result improves on recent work by Khanna, Padaki, and Waingarten (SODA 2026), who achieve the same runtime, but are only competitive with the sparsest $2(1+\varepsilon)\alpha$-navigable graph. Since sparsity often increases rapidly with $\alpha$ (typically $\alpha$ is chosen $\leq 1.1$) such a result can be significantly weaker than our bound. Previously, the best algorithm that could compete with the sparsest $(1+\varepsilon)\alpha$-navigable graph required $\widetilde{O}(n^{2.5})$ time. 
Beyond our main result, we show that our techniques extend to the related problem of constructing sparse $\tau$-monotonic graphs, another popular target in nearest neighbor search. Moreover, we provide a fine-grained complexity analysis that suggests that improving the $O(n^{\omega})$ running time in our result is unlikely, or at least would require  a very different algorithmic approach. This analysis is based on proving hardness results for problems related to \emph{verifying} whether a graph is navigable, which is a topic of independent interest. 
\end{abstract}

\medskip\medskip

\section{Introduction}

Nearest neighbor search has been a central problem in computer science for decades \cite{Clarkson:1994,Kleinberg:1997,KushilevitzOstrovskyRabani:1998,Charikar:2002,BeygelzimerKakadeLangford:2006,Clarkson:2006,JegouDouzeSchmid:2011,AndoniIndykNguyen:2014,AumullerBernhardssonFaithfull:2020}.
Given a dataset $P=\{p_1,\ldots,p_n\}$ together with a distance function $d(\cdot,\cdot )$, the goal is to construct a data structure that can answer (approximate) nearest-neighbor queries: given \(q\), return a \(p\in P\) that (approximately) minimizes \(d(p,q)\). 
Interest in this problem has recently surged due to its central role in modern AI systems, which often represent data using high-dimensional vector embeddings.
Tasks such as semantic search, retrieval-augmented generation, recommendation, and clustering all require fast nearest-neighbor search over such vector representations \cite{LewisPerezPiktus:2020,KitaevKaiserLevskaya:2020,XiongXiongLi:2021,LuoLakshmanShrivastava:2022,Bruch:2024}.

Given the problem's importance, many different methods have been proposed to solve approximate nearest neighbor search, including well-known and widely used methods like locality sensitive hashing \cite{IndykMotwani:1998,AndoniIndyk:2008,AndoniIndykLaarhoven:2015}, quantization-based inverted file indices \cite{JegouDouzeSchmid:2011,JohnsonDouzeJegou:2021,GuoSunLindgren:2020}, and tree-based methods \cite{BeygelzimerKakadeLangford:2006,MujaLowe:2009}. In recent years, there has been a surge of interest in a class of algorithms referred to as \defn{graph-based nearest neighbor search methods}. Such methods include Hierarchical Navigable Small World (HNSW) 
Graphs \cite{MalkovYashunin:2020}, Vamana/DiskANN \cite{SubramanyaDevvritKadekodi:2019,KrishnaswamyManoharSimhadri:2024}, 
Navigating Spreading-out Graphs (NSG) \cite{FuXiangWang:2019}, and more \cite{MalkovPonomarenkoLogvinov:2014,FuCai:2016,HarwoodDrummond:2016,PengChoiChan:2023,ManoharShenBlelloch:2024,AumullerBernhardssonFaithfull:2020,SimhadriAumullerIngber:2024,AryaMount:1993}. Such methods offer state-of-the-art performance on a number of leading benchmarks and search competitions \cite{AumullerBernhardssonFaithfull:2020,SimhadriAumullerIngber:2024,SimhadriWilliamsAumuller:2022}.

In graph-based methods, the core data structure used to support nearest neighbor search is a directed graph $G=(P,E)$ with a node corresponding to every point in the dataset. A query $q$ is serviced by starting at some node in the graph, and greedily moving to the out-neighbor that is closest to $q$. The search continues in this way, terminating when no closer neighbor exists.
Given the simplicity of this greedy search algorithm, a key differentiator between graph-based methods is how the search graph $G$ is constructed. How should the edges in $E$ be chosen to facilitate fast and accurate nearest-neighbor search?

\subsection{$\alpha$-Navigable Graph Construction}\label{sec:prior_works}
While many different approaches have been employed to construct search graphs, \defn{navigability} has been repeatedly identified as a desirable property for such graphs \cite{MalkovPonomarenkoLogvinov:2014,SubramanyaDevvritKadekodi:2019}. A graph is \defn{navigable} if, for every source point $s$ and distinct target point $t$ in $P$, the graph has an edge $(s, v)$ where $d(v, t) < d(s, t)$, i.e. the source has an edge that leads closer to the target. On such a graph, a greedy traversal from any starting point in $P$ must converge to any query point, so long as that query is itself a member of $P$; at every step, the navigability property ensures that the traversal can make progress, until it lands on the query itself. 

This basic property can be strengthened to \defn{\(\alpha\)-navigability}, defined for any parameter $\alpha \geq 1$. In this case, it is not sufficient that a source simply have an edge leading closer to each target: it must have an edge that shrinks the distance to a target by a multiplicative factor of \(\alpha\) or more. Formally, for each source-target pair $s,t$, there must be an edge $(s, v)$ where $d(v, t) < \frac{1}{\alpha}d(s, t)$. 

\(\alpha\)-navigability is a powerful property. If $G$ has this property, then greedy search not only converges to any query in $P$, but it can be shown that it provides provably accurate approximate nearest neighbors for arbitrary queries. In particular, as long as the distance function $d$ is a metric, for any query $q$, greedy search will converge to a nearest neighbor whose distance from $q$ is no more than $\frac{\alpha + 1}{\alpha - 1}$ times that of the query's true nearest neighbor \cite{IndykXu:2023}. We obtain a better approximation for larger values of $\alpha$.

Moreover, in addition to ensuring provably accurate results, $\alpha$-navigability ensures fast convergence. When $\alpha > 1$, we will converge to an $\frac{\alpha + 1}{\alpha - 1}$-approximate nearest neighbor in no more than
\[
\lceil\log_\alpha(\Delta_P)\rceil,
\]
steps, where the \defn{aspect ratio} $\Delta_P$ of $P$, is the ratio between the largest and smallest pairwise distances in $P$. Moreover, search does not just converge for queries in $P$. 

Fast convergence alone, however, does not govern the computation cost of graph-based nearest neighbor search. In particular, the cost of every search iteration scales with the number of out-neighbors of the current node. I.e., search is faster if $G$ is sparse. Graph sparsity also  controls the space complexity of the search data structure, an important consideration in applications. The importance of sparsity motivates the main algorithmic question considered in this paper:

\begin{problem}[Sparse $\alpha$-Navigable Graph Construction]
    Given a dataset $P = \{p_1, \dots, p_n\}$, distance function $d(\cdot, \cdot)$, and $\alpha \geq 1$, return an $\alpha$-navigable graph $G = (P, E)$ with as few edges as possible.
\end{problem}

The study of this problem was recently initiated in \cite{ConwayEtAl2025} and \cite{KhannaPadakiWaingarten:2025}. While constructing the sparsest $\alpha$-navigable graph is NP-hard, even for points in Euclidean space \cite{ConwayEtAl2025}, an $O(\log n)$  approximation can be obtained in $O(n^3)$ time by reframing the problem as $n$ set cover instances (see \ref{subsec:set_cover_perspective} for more details). These recent papers improve this bound. For example, letting $\OPT_\alpha$ denote the minimum number of edges in any $\alpha$-navigable graph for $P$, \cite{ConwayEtAl2025} presents an algorithm for constructing an $\alpha$-navigable graph with $O(\OPT_\alpha \log n)$ edges that runs in worst-case $\widetilde O(n^{2.5})$ time.\footnote{For the special case of $\alpha = 1$, this runtime can be further improved to \(\widetilde{O}(n^2)\), which is essentially optimal under the Strong Exponential Time Hypothesis \cite{ConwayEtAl2025}. We also note that for general $\alpha$, the method actually runs in $\widetilde O(\min(n \cdot \OPT_\alpha, n^{2.5}))$, so can be faster than $\widetilde O(n^{2.5})$  if $\OPT_\alpha = o(n^{1.5})$.}

\cite{KhannaPadakiWaingarten:2025} gives an even faster algorithm in a \defn{resource-augmented} setting. They present an $\widetilde O(n^\omega \log(\Delta_P)/\varepsilon)$ time algorithm that computes an $\alpha$-navigable graph with $O(\OPT_{2\alpha(1 + \varepsilon)}\log n)$ edges, where $\varepsilon > 0$ is any small constant and $2 \leq \omega < 2.372$ is the exponent of fast $n \times n$ matrix multiplication. This result is {resource-augmented} bound because $2\alpha(1 + \varepsilon) > \alpha$, we have that $\OPT_{2\alpha(1 + \varepsilon)} \geq \OPT_\alpha$.\footnote{Prior work uses the term \defn{bicriteria} to refer to these algorithms.  We prefer to use the term \emph{resource-augmented} since there are in fact not two criteria being optimized. Rather, we our goal is to return a graph whose size is competitive with the sparsest graph satisfying a slightly stricter set of constraints.} Thus, the algorithm returns a graph whose sparsity is within an $O(\log n)$ factor of an objective larger than $\OPT_\alpha$. Such an algorithm can yield a runtime improvement if $\OPT_{\alpha} \in \widetilde\omega(n^{\omega - 1}\log(\Delta_P))$, as other algorithms exist which run in $\widetilde O(n \cdot \OPT_\alpha)$ time.

\subsection{How much resource augmentation is too much?}
While resource augmentation leads to a clear runtime advantage, reducing graph construction cost from roughly $\widetilde{O}(n^{2.5})$ time to roughly $\widetilde{O}(n^{\omega})$ time, it is natural to ask how much of a price is paid in terms of graph sparsity. In other words, \textbf{how much larger is $\text{OPT}_{2\alpha (1+\varepsilon)}$ than  $\text{OPT}_\alpha$?}

In short, the answer is that the gap between these quantities can be substantial. When constructing $\alpha$-navigable graphs for real-world datasets, practitioners tend to set $\alpha$ close to 1 (usually in the range $[1.0, 1.2])$ \cite{SubramanyaDevvritKadekodi:2019, MalkovYashunin:2020, singh2021freshdiskann, ManoharShenBlelloch:2024}. $\alpha$ is always chosen less than $2$. Recalling that higher $\alpha$ improves both accuracy and convergence rate, this suggests that even $\text{OPT}_2 \leq \text{OPT}_{2\alpha (1+\varepsilon)}$ is too large to be usable. Indeed it has been confirmed experimentally that, on many real-world datasets, as $\alpha$ grows towards 2, the sparsest $\alpha$-navigable graph becomes essentially complete.%

One explanation for this is that, while not truly random, many modern high-dimensional vector embedding datasets exhibit properties of random points in Euclidean space. In particular, the majority of vectors are nearly orthogonal to each other, so distances between points tend to be similar. Indeed, in common embedding datasets, the aspect ratio is quite low, with the maximum pairwise distance no more than a small factor of the minimum. In an $\alpha$-navigable graph with $\alpha > 2$, any source-target pair requires an edge that gets at least $50\%$ closer to the target. For common datasets, such an edge may simply not exist: the only option is to connect directly to the target. I.e., $\OPT_2$ will be close to $n^2$.

Given this, the goal of this work is to develop graph construction algorithms that enjoy the same efficient runtime as the method of \cite{KhannaPadakiWaingarten:2025}, but without sacrificing so much on sparsity. In particular, instead of competing with $\OPT_{2(1+\varepsilon)\alpha}$, ideally we would return a graph whose sparsity is competitive with the best $(1+\varepsilon)\alpha$-navigable graph, where $\varepsilon$ can be set arbitrarily close to zero. 

\subsection{Our Contributions}
The discussion above exposes a concrete gap between existing construction algorithms: one can obtain a strong sparsity guarantee with a running time of roughly $\widetilde O(n^{2.5})$, or a faster $\widetilde O(n^\omega \log(\Delta) / \varepsilon)$ running time with resource augmentation that may make the resulting sparsity guarantee prohibitively weak. Our results largely close this gap. We first give an $\widetilde O(n^\omega \log(\Delta) / \varepsilon)$-time algorithm that reduces the resource augmentation from $2\alpha(1 + \varepsilon)$ to $\alpha(1 + \varepsilon)$. We then study the navigability-verification subroutine that forms the principal bottleneck in this algorithm, giving conditional lower bounds on improving or eliminating it.

\paragraph{Improved resource augmented approximation algorithm}
Our first result is an improvement of \cite{KhannaPadakiWaingarten:2025}'s resource augmented approximation algorithm, tightening the sparseness guarantee.
\begin{theorem}\label{thm:main_upper_bound}
For any $\varepsilon > 0$, there exists a randomized algorithm that, given point set \(P\), arbitrary distance function \(d(\cdot,\cdot)\), and \(\alpha>1\), constructs an \(\alpha\)-navigable graph, with at most
\[
O\big(\text{OPT}_{\alpha(1+\varepsilon)}\log n\big)
\]
edges,\footnote{The guarantee is actually slightly stronger: the constructed graph not only minimizes the total number of edges, but also individual degrees. That is, the out-degree of each point is within $O(\log n)$ of its lowest possible out-degree in any $\alpha$-navigable graph over $P$.} where \(\text{OPT}_{\alpha(1+\varepsilon)}\) is the size of the sparsest \(\alpha(1+\varepsilon)\)-navigable graph. The algorithm runs in \(\widetilde{O}\big(n^\omega \cdot \log (\Delta_P)/\varepsilon \big)\) time,\footnote{We assume $d(\cdot, \cdot)$ takes $O(1)$ time. More generally, if computing $d(\cdot, \cdot)$ takes $O(T)$ time, then our algorithm runs in $\widetilde{O}\left(n^\omega \cdot \log (\Delta_P)/\varepsilon \right) + O(n^2 \cdot T)$ time. The same caveat applies to the prior works discussed in \Cref{sec:prior_works}, which also take on an additional $O(n^2 \cdot T)$ cost for computing distances.} where \(\Delta_P\) is the set's aspect ratio.
\end{theorem}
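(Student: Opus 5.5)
Fix a source $s$; the goal is an out-neighborhood $N(s)$ that is $O(\log n)$ times the minimum out-degree of $s$ in any $\alpha(1+\varepsilon)$-navigable graph. Summing over $s$ gives the edge bound and also the per-vertex guarantee in the footnote. For each $s$ this is a set cover instance. The universe is the set of targets $t \ne s$. Each candidate neighbor $v$ covers $C_\alpha(v)=\{t : \alpha\, d(v,t) < d(s,t)\}$. Exact greedy set cover costs $\Theta(n^2)$ per source, $\Theta(n^3)$ overall, which is too slow. I would therefore run a randomized, multiplicative-weights/sampling style cover of the kind in \cite{KhannaPadakiWaingarten:2025}, and batch its coverage tests over all $n$ sources as fast matrix multiplications. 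The gain over their $2\alpha$ loss must come from a better discretization of distances.

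\textbf{Discretization.} Round all distances into geometric buckets of ratio $(1+\varepsilon)$. This gives $L=O(\log(\Delta_P)/\varepsilon)$ scales after normalizing by the minimum distance. For each scale $\ell$, define 0/1 matrices $A_\ell[s,t]=1$ iff $d(s,t)$ lies in bucket $\ell$, and $B_\ell[v,t]=1$ iff $\alpha(1+\varepsilon)\,d(v,t) < (1+\varepsilon)^{\ell}$. A product of the form $M_\ell = X\cdot B_\ell^{\top}$ then counts, for every pair $(s,v)$ at once, how many (weighted) uncovered targets of $s$ at scale $\ell$ are covered by $v$ with margin. Summing over $\ell$ costs $O(n^\omega L)$ per round.

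This sandwiches the relaxed coverage between the two exact ones. If $t$ is counted for $v$, then $\alpha(1+\varepsilon) d(v,t) < (1+\varepsilon)^{\ell} \le (1+\varepsilon)d(s,t)$, so $\alpha\, d(v,t) < d(s,t)$ and $t \in C_\alpha(v)$. Conversely, every $t\in C_{\alpha(1+\varepsilon)^2}(v)$ is counted. After rescaling $\varepsilon$, the relaxed sets contain the optimal solution's sets for $\alpha(1+\varepsilon)$, so the optimum of the relaxed instance is at most the true optimum. Every cover of the relaxed instance is a genuine $\alpha$-navigable neighborhood. The key difference from \cite{KhannaPadakiWaingarten:2025} is that I threshold directly by the bucket of $d(s,t)$, rather than passing through a triangle-inequality argument that loses a factor of 2. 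Since $d$ is arbitrary, no metric property should be used at all.

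\textbf{Running greedy via batched products.} I would run a standard approximate-greedy / random-sampling set cover in phases, for all sources simultaneously:
\begin{enumerate}[label=(\roman*)]
\item Maintain for each $s$ the uncovered-target indicator $U[s,t]$.
\item Compute $\mathrm{cov}[s,v]=\sum_\ell (U\circ A_\ell)\, B_\ell^{\top}$, at cost $O(n^\omega L)$.
\item In each source, add candidates whose coverage is within a constant factor of the current maximum. To avoid picking redundant ones, use randomized sampling with geometrically decreasing thresholds, in the style of parallel set cover (Berger--Rompel--Shor / Rajagopalan--Vazirani).
\item Recompute $U$ via one more product $\mathbbm{1}[N]\cdot B_\ell$ restricted per scale.
\end{enumerate}
The number of phases should be polylogarithmic, and the standard analysis gives an $O(\log n)$ approximation in expectation.

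\textbf{Expected obstacle.} The main obstacle is bounding the number of phases while keeping the $O(\log n)$ ratio. Naive greedy needs up to $n$ rounds, each needing a matrix product. I would try the standard parallel set-cover scheme. It uses $O(\log n)$ threshold levels. Within each level it takes $O(\log^2 n)$ randomized subrounds, in which each near-maximal candidate is picked with probability inverse to its current degree of overlap. This yields $\mathrm{polylog}(n)$ products, absorbed into the $\widetilde O$. A final step does one verification product per scale, and patches any uncovered pair $(s,t)$ by adding the edge $(s,t)$ directly. This guarantees $\alpha$-navigability deterministically, and it adds only few edges in expectation.
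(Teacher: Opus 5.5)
Your plan is sound in outline, but it takes a genuinely different route from the paper's.

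\textbf{What the paper does.} The paper never computes coverage counts. For each source it runs \PartialCover, a sequential voting greedy on a sample of $O(\opt\log n)$ residual targets. A Chernoff bound plus a union bound over the at most $n^{O(\opt)}$ possible residual sets shows that this halves the residual instance using $O(\opt)$ sets. Each pass costs $\widetilde O(n)$ only because $\opt\ln n$ random edges are added first, which removes all high-frequency targets, and $\opt$ is found by doubling. Matrix multiplication enters only through \VerifyNav, i.e.\ $O(\log^2 n)$ batches of Boolean products that refresh the residual sets.

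\textbf{What you do instead.} You fix one relaxed system $R_s(v)$. With buckets $[(1+\varepsilon)^\ell,(1+\varepsilon)^{\ell+1})$, thresholding at $\alpha\,d(v,t)<(1+\varepsilon)^\ell$ already gives $C_{\alpha(1+\varepsilon)}(v)\subseteq R_s(v)\subseteq C_\alpha(v)$, so no rescaling is needed. You then observe that the quantities a greedy-type algorithm needs are all rowwise aggregates of the form $\sum_\ell (X\circ A_\ell)B_\ell^{\top}$ or $\sum_\ell A_\ell\circ(YB_\ell)$. These include residual set sizes, degrees of targets among a candidate family $Y$, and coverage by a selected family. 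Finally you run a parallel set-cover algorithm on all $n$ rows in lockstep.

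\textbf{Trade-offs.} Your route avoids the \SetOf/\FreqOf structure, the frequency bounding, and the guessing of $\opt$. It also keeps the residual sets exact, so your patching step is unnecessary: just run until every row of $U$ is empty. The price is more product batches (a higher power of $\log n$) and a black-box import of the parallel set-cover analysis, in place of the paper's self-contained sampling lemma. A small correction to your framing: the factor $2$ vanishes because you work with the true set systems rather than a triangle-inequality proxy, as in the paper, not because of a finer discretization.

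\textbf{The black box you must pin down.} It has to be of Berger--Rompel--Shor type, where every step is a count or an OR and hence a product of the above form. Those steps are: bucketing candidates by residual size, bucketing targets by their degree among candidates, counting high-degree targets per candidate, sampling candidates independently at rate about the inverse of the top degree, and checking that the sample covers enough new targets. Rajagopalan--Vazirani and MaNIS-style algorithms do not fit. There each target votes for the \emph{first} containing candidate in a random order, which here means computing $\min\{\pi_s(v) : v \text{ a candidate of } s,\ B_\ell[v,t]=1\}$ for all $(s,t)$. That is a Boolean min-plus product, which is not known to be computable in $O(n^\omega)$ time (cf.\ the paper's hardness section).
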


The starting point for \Cref{thm:main_upper_bound} is the observation that a sparsest $\alpha$-navigable graph problem can be formulated as a combination of $n$ individual set cover problems. Since the $\alpha$-navigable constraint for a source-target pair $(s, t)$ is satisfied by an edge $(s, v)$ where $d(v, t) < \frac{1}{\alpha}d(s, t)$, we can express a source point $s$ as a set cover problem where the elements to be covered are all targets $t \not= s$. In this set cover instance, the sets correspond with potential outedges from $s$, with the set associated with edge $(s, v)$ containing all targets $t$ where $d(v, t) < \frac{1}{\alpha}d(s, t)$. Our main approach, then, is to efficiently solve these $n$ set cover instances.

\cite{KhannaPadakiWaingarten:2025} also makes this observation, but its $\widetilde O(n^\omega \cdot \log(\Delta)/\varepsilon)$ time algorithm does not solve the set cover instances directly, instead adding edges through a geometric argument relying on triangle inequality. At a high level, the driving observation is that an edge from a source point $s$ to an intermediate point $v$ that approaches a given target by a factor of $2\alpha$ can be replaced with an edge to some other point near $v$ and still approach the target by a factor of $\alpha$. Thus, the authors add roughly $\OPT_{2\alpha(1 + \varepsilon)}$ edges and argue that these solve a large number of the source-target constraints from required for an $\alpha$-navigable graph. This is the source of the factor of $2$ from their resource augmentation.

To eliminate the factor of $2$, we replace this with a procedure that directly accesses and solves the underlying set cover instances. This not only comes with a tightening of the resource augmentation but also removes the requirement that the points live in a metric space. Unlike \cite{KhannaPadakiWaingarten:2025}, our algorithm works on any symmetric distance function, without need for triangle inequality. We do, however, reuse a procedure from \cite{KhannaPadakiWaingarten:2025}, which we detail in \Cref{subsec:verify_nav}, that there is a small number of discrete pairwise distances among the points in $P$. To this end, we choose several fixed distance scales and round each distance to one of these, incurring up to a $(1 + \varepsilon)$ distortion on all distances. Thus, we retain the $(1 + \varepsilon)$ resource augmentation factor.

Finally, we show that the same framework extends to the related problem of constructing sparse \(\tau\)-monotonic graphs; see \Cref{subsec:tau_monotonic}.

\paragraph{Hardness of navigability verification.} A key component of our construction, inherited from \cite{KhannaPadakiWaingarten:2025}, is a black-box procedure that determines which source-target $\alpha$-navigability constraints are satisfied by the edges currently in the graph. After distances are discretized, the constraints at each distance scale can be checked using one matrix multiplication. These matrix multiplications dominate the running time of our algorithm, so a substantially faster verification procedure would immediately yield a faster resource-augmented constructions.

This motivates us to study navigability verification as an independent algorithmic problem. We prove two hardness results. First, even when restricted to source-target pairs at a single fixed distance, $\alpha$-navigability verification is at least as hard as Boolean matrix multiplication: an $n \times n$ Boolean matrix product can be computed using one call to the verification procedure and $O(n^2)$ additional time. Thus, bypassing matrix multiplication in our verification routine would require a corresponding breakthrough for Boolean matrix multiplication. Second, we consider exact verification over all undiscretized distances. We show that this problem is equivalent, up to polylogarithmic factors, to Boolean min-plus product. Combining this equivalence with a recent conditional lower bound for Boolean min-plus product \cite{Fischer:2026}, we show that, under the APSP conjecture, exact $\alpha$-navigability verification cannot be solved in $O(n^{2.5 - \delta})$ time for any constant $\delta > 0$. Together, these results suggest that both the matrix multiplication bottleneck and the use of resource augmentation are difficult to avoid within our framework.

\subsection{Technical Overview and Prior Work} \label{subsec:FastConstructions}

A first general answer for the question of how sparse a navigable graph can be made, and how efficiently such a graph can be constructed was given by \cite{DiwanGouMusco:2024}. They showed every \(n\)-point dataset admits a nontrivial sparse navigable graph, independent of the ambient dimension and without relying on metric structure.  Their construction takes the union of a nearest-neighbor graph and a random graph, and in \(\widetilde O(n^2)\) time constructs a navigable graph with \(\widetilde O(n^{3/2})\) edges.  Equivalently, every point set admits a navigable graph of average degree \(\widetilde O(\sqrt n)\).  This bound is essentially tight in the worst-case: for random points in \(O(\log n)\)-dimensional Euclidean space, a navigable graph must have average degree \(\Omega({n^{1/2-o(1)}})\) (with high probability).

This worst-case picture, however, does not capture the instance-by-instance structure of the problem.  Many datasets may admit navigable graphs far sparser than the \(\widetilde O(n^{3/2})\)-edge worst-case guarantee.  Thus the more refined objective is not merely to construct some sparse navigable graph, but to approximate the sparsest navigable graph for the given instance.  This instance-optimal viewpoint was subsequently developed~\cite{ConwayEtAl2025,KhannaPadakiWaingarten:2025}. The key observation in both of these works is that the sparsest $\alpha$-navigable graph problem decomposes into \(n\) coupled set-cover instances, one for each source $s$. For a fixed source \(s\), a candidate outgoing edge \((s,u)\) covers exactly those targets \(t\) for which $u$ gives a valid $\alpha$-progress step toward $t$. Thus, choosing the minimum out-neighborhood of \(s\) is precisely a minimum set-cover problem over the set of targets \(P\setminus \{s\}\). Solving these \(n\) set-cover instances near-optimally yields a near-optimal sparsest $\alpha$-navigable graph. In fact, it produces an even stronger approximation property: for every \(s\in P\), the out-neighborhood of \(s\) is approximately the size of the minimum neighborhood of \(s\) in any navigable graph. This set-cover formulation immediately yields an \(O(\log n)\)-approximation to the optimum via the greedy set-cover algorithm, but a direct implementation takes cubic time even only to construct the sets explicitly.

The main algorithmic driver of \cite{ConwayEtAl2025} is a randomized simulation of the greedy algorithm, intended to solve each set cover instance without explicit constructions of the sets. At each step, a universe element is sampled. Then, all sets covering the element are recorded. The amount of votes a set received is used as an estimator for its size, used to pick larger sets. This produces an algorithm that computes a \(O(\log n)\)-approximation of the sparsest navigable graph in \(\widetilde{O}(n\cdot \text{OPT})\) time. Pairing this with clever preprocessing that exploits relationships between the different set cover instances reduces the running time for general \(\alpha\geq1\) to \(\widetilde{O}(\min\{n\cdot \text{OPT},n^{2.5}\}\big)\). For the case of \(\alpha=1\), i.e., navigable graphs, further dependencies can be exploited to produce a near-optimal \(\widetilde{O}(n^2)\) running time.

The work by \cite{KhannaPadakiWaingarten:2025} includes an algorithm that computes an \(O(\log n)\)-approximation of the sparsest \(\alpha\)-navigable graph in \(\widetilde{O}(n\cdot \text{OPT})\) time. It features the same core idea of a randomized simulation of the greedy algorithm seen in \cite{ConwayEtAl2025}. It also includes a potentially faster resource augmented approximation algorithm that runs in \(O(n^\omega\cdot \log(\Delta)/\varepsilon)\) time, and computes an \(O(\log n)\)-approximation of the sparsest \(2\alpha\cdot(1+\varepsilon)\)-navigable graph, where \(\varepsilon>0\) can be chosen arbitrarily, and \(\Delta\) is the aspect ratio of the point set. Observe that this is a weaker density guarantee as the sparsest \(2\alpha\cdot(1+\varepsilon)\)-navigable graph may be much denser than the sparsest \(\alpha\)-navigable graph. This resource augmented approximation is largely comprised of two subroutines. The first subroutine covers a constant fraction of all uncovered source-target constraints, and the second subroutine is used to verify all satisfied constraints. The implementation of the first subroutine introduces the multiplicative \(2\) in the \(2\alpha\cdot(1+\varepsilon)\)-approximation. Further, for a fast implementation of the second subroutine, specifically by using fast matrix multiplication, distances are discretized, which introduces the multiplicative \((1+\varepsilon)\) approximation term.

\section{Preliminaries}

\subsection{Navigability and \(\alpha\)-Navigability} \label{subsec:NavDefinitions}
\begin{definition}[Navigable Graph]
    \label{def:navigability}
Given point set $P$ and distance function $d : P\times P \to \mathbb{R}^+$, we say that a directed graph $G=\left(P,E\right)$ is \emph{navigable} if, for any source point $s \in P$ and target point $t \in P$, $s \not= t$, there exists some edge $(s, v) \in E$ such that $d(v, t) < d(s, t)$. That is, $s$ has an edge that, when taken, gets closer to $t$.
\end{definition}
Equivalently, for every source \(s\) and target \(t\), there exists a directed path \(p_0=s,p_1,...,p_k=t\) from \(s\) to \(t\) such that \(\forall i:d(p_{i+1},t)<d(p_i,t)\). That is for every source-target pair, there exists a monotone path from the source to the target, and thus the most basic form of greedy search will succeed. Navigability has been studied in the nearest-neighbor search literature for decades \cite{JaromczykToussaint:1992,AryaMount:1993, KhannaPadakiWaingarten:2025,ConwayEtAl2025}. 

While the definition of navigability only guarantees the convergence of basic greedy search for targets that are within \(P\), recent work also shows that multiplicative approximation guarantees can be obtained for points outside of \(P\) in an arbitrary metric when using Adaptive Beam Search, a variant of basic greedy search \cite{Al-JazzaziDiwanGou:2025}.
\paragraph{\(\alpha\)-Navigable graphs} A well-studied direct generalization of navigable graphs  \cite{KrishnaswamyManoharSimhadri:2024,IndykXu:2023, KhannaPadakiWaingarten:2025, ConwayEtAl2025}, \(\alpha\)-Navigability requires not only that a monotone path between every source and target exists, but that the distance to the target shrinks at a multiplicative rate of \(\frac{1}{\alpha}\) at every step along the path.
\begin{definition}[\(\alpha\)-Navigable Graph]
    \label{def:navigability}
Given point set $P$ and distance function $d : P\times P \to \mathbb{R}^+$, and \(\alpha\geq 1\), we say that a directed graph $G=\left(P,E\right)$ is \emph{\(\alpha\)-navigable} if, for any source point $s \in P$ and target point $t \in P$, $s \not= t$, there exists some edge $(s, v) \in E$ such that $d(v,t) < \frac{1}{\alpha}\cdot  d(s,t)$. 
\end{definition}
A ubiquitous modern nearest-neighbor search algorithm, DiskANN \cite{KrishnaswamyManoharSimhadri:2024} constructs graphs that are \(\alpha\)-navigable \footnote{Since the algorithm producing a strict \(\alpha\)-navigable graphs is too slow in practice, the authors of \cite{KrishnaswamyManoharSimhadri:2024} propose a faster heuristic method, which is widely used.}. While the definition of navigability maintains that every greedy search converges, \(\alpha\)-navigable graphs aim to ensure fast convergence by imposing a rate of progress. Further, recent work \cite{IndykXu:2023} shows that greedy search over \(\alpha\)-Navigable graphs always returns an approximate nearest neighbor $\widetilde{p}$ with  $d(q,\widetilde{p}) \leq \frac{\alpha+1}{\alpha-1} \cdot \min_{p\in P} d(q,p)$, even when \(q\notin P\). 

\subsection{Distance Access Model}\label{subsec:distance_access_model}

Our input is a set of points $P = \{p_1, \dots, p_n\}$ together with oracle access to a distance function $d : P \times P \to \mathbb{R}_{\geq 0}$. A query to the oracle returns the distance between two points in $O(1)$ time. We otherwise treat the points and the distance function as black boxes; in particular, our algorithms do not require access to an explicit representation of the points.

We make only two assumptions on the distance function. The first is that $d$ is symmetric, that is, $d(p_i, p_j) = d(p_j, p_i)$.\footnote{Technically, this is not strictly required since all distances during greedy traversal are computed against a common query point, so traversal converges even in the presence of asymmetric distances.} The second is that $d(p_i, p_j)$ is 0 if $p_i = p_j$ and positive if $p_i \not= p_j$. This is so that a point is its own nearest neighbor; otherwise, greedy traversal may not converge on a query even on a navigable graph. Notably, we do not require that $d$ is a metric.

\subsection{Reducing Sparsest Navigable Graphs to Minimum Set Cover}\label{subsec:set_cover_perspective}
It has been observed independently by \cite{ConwayEtAl2025,KhannaPadakiWaingarten:2025}, that the problem of constructing a sparse $\alpha$-navigable graph over a point set $P=\{p_1,...,p_n\}$ can be cast as a collection of $n$ coupled set cover instances, each with $n-1$ elements, and $n-1$ sets. Specifically, we have a single set cover instance for each point. The instance corresponding to point $p_i$ involves covering all points $p_k\neq p_i$, and has a set for every $j\neq i$, denoted by $S_{i\to j}$. $S_{i\to j}$ contains every point $p_j$ provides a greedy step towards when standing at $p_i$, i.e., all points $p_k$ such that $d(p_j,p_k) < \frac{1}{\alpha}\cdot d(p_i,p_k)$. In other words, if we consider source-target $\alpha$-navigability constraints in which $p_i$ is the source, $S_{i\to j}$ contains all targets which corresponding constraints are satisfied by adding the directed edge $(p_i,p_j)$. Observe that a set cover for this instance corresponds to an out-neighborhood for $p_i$ that covers all $\alpha$-navigability constraints in which $p_i$ is the source, and a minimum set cover corresponds to a minimum such out-neighborhood. Consequently, finding minimum set cover for all said instances produces a minimum navigable graph. The relationship between these different set cover instances was leveraged by \cite{ConwayEtAl2025} to produce efficient query access to these instances without explicitly constructing them, which we will review next.

\subsection{Set Access Model and Efficient Query Access for Set Cover}\label{subsec:set_access_model}
Under the distance access model described in \Cref{subsec:distance_access_model}, we can efficiently query the set cover instances induced by the problem of constructing a sparse navigable graph over $P$.
In particular, within the set cover instance corresponding to $p_i$, \cite{ConwayEtAl2025} showed that with $O(n^2 \log n)$ preprocessing, we can support the following three operations in constant time:
\begin{enumerate}[label=]
    \item $\MemberOf(p_k, S_{i \to j})$ returns \textsc{Yes} if $p_k \in S_{i \to j}$ and \textsc{No} otherwise.
    \item $\FreqOf(p_k)$ returns the number of sets $S_{i \to j}$ containing $p_k$.
    \item $\SetOf(p_k, \ell \in [n])$ returns the $\ell$th set $S_{i \to j}$ containing $p_k$ (in some order), assuming $\ell \leq \FreqOf(p_k)$.
\end{enumerate}
Suppose we fix the set cover instance, that is, we fix $i$. Supporting \MemberOf queries is straightforward. By definition, $p_k \in S_{i \to j}$ if and only if $d(p_j,p_k) < \frac{1}{\alpha} \cdot d(p_i,p_k)$, which can be determined by only two distance accesses.

To support the other two queries, we first preprocess a \defn{distance-based permutation matrix}. This is an $n \times n$ matrix $\Pi$ where the $i$th row stores the sorted order of all points by increasing distance from $p_i$. That is, $p_{\Pi_{i, j}}$ is the $j$th closest point to $p_i$. This matrix can be constructed in $O(n^2 \log n)$ time by simply sorting each row.

Using the permutation matrix, we observe that, because $p_k \in S_{i \to j} \iff d(p_j, p_k) < \frac{1}{\alpha} \cdot d(p_i, p_k)$, all of the sets $S_{i \to j}$ containing $p_k$ are associated with the closest points to $p_k$ and can therefore be found within some prefix of the $k$th row of $\Pi$. We can find the length of the prefix by binary search on row $k$ for the last point $j$ satisfying $d(p_j, p_k) < \frac{1}{\alpha} d(p_i, p_k)$. We can precompute these prefix lengths for all instances $i$ and targets $p_k$ in $O(n^2 \log n)$ time total, then perform a simple lookup of these prefix lengths to answer \FreqOf queries. \SetOf queries follow the same principle; the $\ell$th set containing $p_k$ is simply $S_{i \to \Pi_{k, \ell}}$.

From this point on, we assume we have access to these three operations. We state this as a lemma.
\begin{lemma}[due to \cite{ConwayEtAl2025}]
    Given a sparsest $\alpha$-navigable graph construction problem specified by an input point set $P$ and distance function $d:P\times P\to \mathbb{R}^+$, there exists a data structure which supports \MemberOf, \FreqOf, and \SetOf queries for the induced set cover instances in $O(1)$ time. This data structure is constructed in $O(n^2\log n)$ time and requires $O(n^2)$ space.
\end{lemma}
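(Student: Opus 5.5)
We build the data structure explicitly and answer each query by table lookup; most of the work is already laid out in \Cref{subsec:set_access_model}. First, we compute all $n^2$ pairwise distances with $O(n^2)$ oracle calls and store them in an $n\times n$ array $D$. This lets any later distance lookup take $O(1)$ time. Second, we build the distance-based permutation matrix $\Pi$ by sorting each row of $D$, at a cost of $O(n\log n)$ per row and $O(n^2\log n)$ in total. Alongside $\Pi$ we store, for each row $k$, the sorted distance values $D_{k,\Pi_{k,1}}\le D_{k,\Pi_{k,2}}\le\cdots$, so that this row can be binary searched. Note that $\Pi_{k,1}=k$, since $d(p_k,p_k)=0$ and all other distances are positive. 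This entry lies in every prefix (we need $0<\frac1\alpha d(p_i,p_k)$), but $S_{i\to k}$ is a legitimate set of instance $i$ whenever $k\ne i$, so it is correctly included.

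Third, we handle the key structural fact. For fixed $i$ and $k$ with $k\neq i$, we have $p_k\in S_{i\to j}$ iff $d(p_j,p_k)<\frac1\alpha d(p_i,p_k)$. This is a threshold condition on the distance from $p_k$ alone, so the set of valid $j$ is exactly a prefix of row $k$ of $\Pi$.

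\begin{itemize}
\item Ties are harmless, because the condition is strict and compares against a fixed threshold. All points at a given distance are therefore either all included or all excluded.
\item We must exclude $j=i$, since the instance has no set $S_{i\to i}$. This case cannot actually occur: $d(p_i,p_k)<\frac1\alpha d(p_i,p_k)$ is false because $\alpha\ge1$ and $d(p_i,p_k)>0$. So $i$ never lies in the prefix.
\end{itemize}

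For each pair $(i,k)$ we binary search row $k$ for the largest index $L_{i,k}$ whose stored distance is below $\frac1\alpha D_{i,k}$. We store $L_{i,k}$ in an $n\times n$ table. This takes $O(\log n)$ per pair, for $O(n^2\log n)$ total.

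The queries are then immediate.
\begin{itemize}
\item $\MemberOf(p_k,S_{i\to j})$ compares $D_{j,k}$ with $\frac1\alpha D_{i,k}$.
\item $\FreqOf(p_k)$ in instance $i$ returns $L_{i,k}$.
\item $\SetOf(p_k,\ell)$ returns $S_{i\to \Pi_{k,\ell}}$, which is valid for $\ell\le L_{i,k}$.
\end{itemize}
Each query is $O(1)$, and the arrays $D$, $\Pi$, the sorted rows, and $L$ together use $O(n^2)$ space. The only step needing real care is the correctness of the prefix characterization: strictness and ties, the exclusion of $j=i$, and the presence of $k$ itself. The running-time accounting is routine.
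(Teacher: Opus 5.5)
Your proposal is correct and follows the paper's own argument: sort each row to get $\Pi$ in $O(n^2\log n)$ time, binary-search row $k$ to precompute the prefix length $L_{i,k}$ for every pair $(i,k)$, and answer \FreqOf by lookup, \SetOf via $\Pi_{k,\ell}$, and \MemberOf with two distance comparisons. Your explicit checks are details the paper leaves implicit, and they are all right: strict thresholds make ties harmless, $j=i$ can never satisfy the condition, and $p_k$ always lies in its own prefix.
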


\subsection{Batched Verification of Navigability Constraints via Matrix Multiplication}\label{subsec:verify_nav}
As in \cite{KhannaPadakiWaingarten:2025}, a key component of our new algorithm for constructing sparse $\alpha$-navigable graphs is a routine for using fast matrix multiplication to verify the $\alpha$-navigability of a graph. Concretely, given a graph over $P$, we'd like to identify which source-target $\alpha$-navigability constraints are satisfied. We first describe how to verify all constraints at a single distance scale, then use distance discretization to verify all constraints.

\begin{lemma}[due to \cite{KhannaPadakiWaingarten:2025}]\label{lem:verify_nav:single_scale}
    Let $G = (P, E)$ be a directed graph, $\alpha \geq 1$, and $r > 0$. In $O(n^\omega)$ time, we can identify every pair $(s, t)$ such that there exists some edge $(s, v) \in E$ satisfying $d(v, t) < r$.
\end{lemma}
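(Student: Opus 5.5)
We reduce the question to a single Boolean matrix product. Let $A$ be the $n \times n$ Boolean adjacency matrix of $G$, with $A_{s,v} = 1$ iff $(s,v) \in E$. Let $B$ be the $n \times n$ Boolean ``closeness'' matrix, with $B_{v,t} = 1$ iff $d(v,t) < r$. We then compute $C = A \cdot B$ over the Boolean semiring. By definition,
\[
C_{s,t} = \bigvee_{v} \bigl(A_{s,v} \wedge B_{v,t}\bigr) = 1
\]
exactly when some $v$ has $(s,v) \in E$ and $d(v,t) < r$. The support of $C$ is therefore precisely the set of pairs $(s,t)$ the lemma asks us to identify.

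The steps are as follows.

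First, build $A$ in $O(n^2 + |E|) = O(n^2)$ time from the edge list.

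Second, build $B$ with $n^2$ distance-oracle queries. Under the distance access model of \Cref{subsec:distance_access_model} this costs $O(n^2)$ time, or $O(n^2 T)$ for general oracle cost $T$, in line with the footnote to \Cref{thm:main_upper_bound}.

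Third, compute the Boolean product. Treat $A$ and $B$ as $0/1$ integer matrices, multiply them with fast matrix multiplication in $O(n^\omega)$ arithmetic operations, and threshold: $C_{s,t} = 1$ iff $(AB)_{s,t} \geq 1$. Entries of $AB$ are integers in $[0,n]$, so every intermediate value has $O(\log n)$ bits. Under the standard word-RAM convention each arithmetic operation then costs $O(1)$, and the product takes $O(n^\omega)$ time. Alternatively, we can work modulo a prime $p > n$, which preserves the test ``is the count nonzero'' and keeps all words bounded.

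Fourth, read off the nonzero entries of $C$ in $O(n^2)$ time.

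Since $\omega \geq 2$, the total time is $O(n^\omega)$. The parameter $\alpha$ plays no role at this level. Its role comes later: to check the constraint for a pair $(s,t)$ with $d(s,t)$ in a given discretized scale, one applies the lemma with $r$ set to that scale divided by $\alpha$. That is presumably why $\alpha$ appears in the statement.

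The only point that needs care is the third step: justifying that the Boolean semiring product reduces to an ordinary integer (or modular) product without blowing up word size. The bound $(AB)_{s,t} \leq n$ settles this. Everything else is direct bookkeeping, so I do not expect a genuine obstacle in this lemma.
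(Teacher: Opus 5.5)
Your proposal is correct and matches the paper's proof. Like the paper, you build the adjacency matrix $A$ and the threshold matrix $B^{(r)}$ with $B^{(r)}_{j,k}=1$ iff $d(p_j,p_k)<r$, compute $AB^{(r)}$ by fast matrix multiplication, and report the pairs with positive entries, for $O(n^2)+O(n^\omega)=O(n^\omega)$ total time; your additional remarks on word size and on $\alpha$ entering only later (through $r$ being a scale divided by $\alpha$ in the batched verification) are consistent with how the paper uses the lemma.
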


\begin{proof}
    Let $A$ be the adjacency matrix of $G$, and define the matrix $B^{(r)} \in \{0, 1\}$ by
    \[
        B^{(r)}_{j, k} = \begin{cases}
            1 & \text{if } d(p_j, p_k) < r\\
            0 & \text{otherwise}
        \end{cases}
    \]
    We construct $B^{(r)}$ using $O(n^2)$ distance queries and compute the matrix product $C^{(r)} = AB^{(r)}$. For any $(i, k)$, $C^{(r)}_{i, k} > 0$ if and only if there exists some $j$ such that $(p_i, p_j) \in E$ and $d(p_j, p_k) < r$. Consequently, we can identify all pairs described in the lemma above by scanning through $C^{(r)}$ and outputting $(p_i, p_k)$ whenever $C^{(r)}_{i, k} > 0$.

    Constructing the matrices and inspecting the product take $O(n^2)$ time, while computing the product takes $O(n^\omega)$ time. Since $\omega \geq 2$, the total running time is $O(n^\omega)$.
\end{proof}

The preceding lemma applies only to pairs at one particular distance. Applying this separately to up to $\Theta(n^2)$ distinct distances is too expensive. We therefore group distances into geometrically increasing scales. For $0 < \varepsilon < 1$, define the distance scales $r_\ell = d_{min}(1 + \varepsilon)^\ell$ for $\ell = 0, \dots, \lceil\log_{1 + \varepsilon}(\Delta)\rceil$, where $d_{min} = \min_{p \not= q} d(p, q)$. We now round each pairwise distance in $P$ down to the nearest scale $r_\ell$, that is, we assign each distance $x \in [d_{min}, d_{max}]$ to the largest scale $r_\ell$ satisfying $r_\ell \leq x$. Consequently, there are only $O(\log_{1 + \varepsilon}(\Delta)) = O(\log(\Delta) / \varepsilon)$ distinct distances, and each new distance differs from its original distance by no more than a factor of $(1 + \varepsilon)$. Let $\rho(x)$ denote the rounded distance for a distance $x$.

We now define the verification procedure used throughout this paper.

\newcommand{\VerifyNav}{\textsc{VerifyNav}\xspace}
\begin{lemma}[Batched verification]
\label{lem:verify_nav}
Given \(P\), a directed graph \(G=(P,E)\), \(\alpha\geq 1\), and \(0<\varepsilon\leq 1\), there is a procedure, \(\VerifyNav(G,\alpha,\varepsilon)\), which runs in
\[
    O\left(
        n^\omega
        \left(1+\log_{1+\varepsilon}\Delta\right)
    \right)
\]
time and returns a set \(\mathcal{N}(G)\) of source-target pairs that
\begin{enumerate}
    \item Includes all $(s, t)$ pairs where no edge $(s, v) \in E$ satisfies $d(v, t) < \frac{1}{\alpha}d(s, t)$
    \item Excludes all $(s, t)$ pairs where some edge $(s, v) \in E$ satisfies $d(v, t) < \frac{1}{\alpha(1 + \varepsilon)}d(s, t)$
\end{enumerate}
Equivalently, if \(\mathcal{U}_\beta(G)\) denotes the set of
unsatisfied \(\beta\)-navigability constraints, then
\[
    \mathcal{U}_\alpha(G)
    \subseteq
    \mathcal{N}(G)
    \subseteq
    \mathcal{U}_{\alpha(1+\varepsilon)}(G).
\]
\end{lemma}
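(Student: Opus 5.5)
The procedure runs the single-scale test of \Cref{lem:verify_nav:single_scale} once per distance scale. For each scale it uses threshold $r_\ell/\alpha$ and reads off only those source-target pairs whose \emph{rounded} distance equals $r_\ell$. Concretely, for each $\ell = 0, \dots, L := \lceil \log_{1+\varepsilon}\Delta\rceil$:
\begin{enumerate}[label=(\roman*)]
\item Build the Boolean matrix $B^{(r_\ell/\alpha)}$ using the \emph{true} distances $d(p_j,p_k)$, not the rounded ones.
\item Compute $C^{(\ell)} = A B^{(r_\ell/\alpha)}$ as in \Cref{lem:verify_nav:single_scale}.
\item For every pair $(s,t)$ with $s\neq t$ and $\rho(d(s,t)) = r_\ell$, put $(s,t)$ into $\mathcal{N}(G)$ exactly when $C^{(\ell)}_{s,t} = 0$.
\end{enumerate}
Every pair with $s\neq t$ has $d(s,t)\in[d_{min},d_{max}]$, so it is assigned to exactly one scale and is examined exactly once.

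The only fact needed about the rounding is a two-sided bound. If $\rho(x) = r_\ell$, then $r_\ell \le x < r_{\ell+1} = (1+\varepsilon) r_\ell$. The upper inequality holds because $r_\ell$ is the \emph{largest} scale not exceeding $x$. For the top scale $\ell=L$ it also holds, since $r_{L+1} > r_L \ge d_{max} \ge x$, so no separate boundary case arises. Hence $x/(1+\varepsilon) < r_\ell \le x$.

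The two containments then follow directly.
\begin{itemize}
\item \emph{First containment.} Suppose $(s,t)\notin\mathcal{N}(G)$, and let $r_\ell=\rho(d(s,t))$. Then some edge $(s,v)\in E$ has $d(v,t) < r_\ell/\alpha \le d(s,t)/\alpha$, so the $\alpha$-constraint for $(s,t)$ is satisfied. Contrapositively, $\mathcal{U}_\alpha(G)\subseteq \mathcal{N}(G)$.
\item \emph{Second containment.} Suppose some edge $(s,v)$ satisfies $d(v,t) < d(s,t)/(\alpha(1+\varepsilon))$. Since $d(s,t)/(1+\varepsilon) < r_\ell$, we get $d(v,t) < r_\ell/\alpha$. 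So $B^{(r_\ell/\alpha)}_{v,t}=1$, hence $C^{(\ell)}_{s,t}>0$ and $(s,t)$ is excluded. This gives $\mathcal{N}(G)\subseteq\mathcal{U}_{\alpha(1+\varepsilon)}(G)$.
\end{itemize}

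For the running time, there are $L+1 = O(1+\log_{1+\varepsilon}\Delta)$ scales. Each costs $O(n^\omega)$ for the product plus $O(n^2)$ to build $B$ and scan $C$. Computing $d_{min}$, $d_{max}$ and $\rho(d(s,t))$ for all pairs takes $O(n^2)$ time, using either a logarithm or a binary search over the scales (the latter costs $O(n^2\log L)$, which is also dominated). This gives the stated bound, since $\omega\ge2$.

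The one point needing care, and the place a naive version goes wrong, is that rounding must affect only which scale a pair is \emph{assigned} to. The test $d(v,t) < r_\ell/\alpha$ must use unrounded $d(v,t)$, because a pair $(v,t)$ may sit at a different scale from $(s,t)$. Using true distances on the $v$-side is what makes both inequalities go through with a loss of only a single factor of $(1+\varepsilon)$.
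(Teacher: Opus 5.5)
Your proposal is correct and takes essentially the same route as the paper. It runs one single-scale product per scale with threshold $r_\ell/\alpha$ on unrounded distances, assigns each $(s,t)$ to the scale $\rho(d(s,t))$, and derives both containments from $d(s,t)/(1+\varepsilon) \le \rho(d(s,t)) \le d(s,t)$. Your explicit check of the top scale and your remark that only the source-target distance may be rounded are minor clarifications the paper leaves implicit.
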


\begin{proof}
    Recall from \Cref{lem:verify_nav:single_scale} that we can identify in $O(n^\omega)$ time the exact set of $(s, t)$ pairs where an edge $(s, v) \in E$ satisfies $d(v, t) < r$.
    
    We execute this procedure on the distance scales $r^\ell / \alpha$ for $\ell = 0, \dots, \lceil\log_{1 + \varepsilon}(\Delta)\rceil$. Thus, we compute, for each distance scale $r^\ell$, the set of all $(s, t)$ pairs where some $(s, v) \in E$ satisfies $d(v, t) < r^\ell / \alpha$. Now take any $(s, t)$ pair. Recall that its rounded distance $\rho(d(s, t))$ matches some distance scale $r^\ell$. For that $r^\ell$, consult whether some $(s, v) \in E$ satisfies $d(v, t) < r^\ell / \alpha$. If not, return $(s, t)$. We claim that doing this over all $(s, t)$ pairs returns all pairs whose $\alpha$-navigability constraint is not satisfied and no pairs whose $\alpha(1 + \varepsilon)$-navigability constraint is satisfied.
    
    First, suppose some $(s, t)$ pair's $\alpha$-navigability constraint is not satisfied. That is, no $(s, v)$ edge satisfies $d(v, t) < \frac{1}{\alpha} d(s, t)$. Since $\rho(d(s, t)) \leq d(s, t)$, it must also hold that no $(s, v)$ edge satisfies the even stronger condition of $d(v, t) < \frac{1}{\alpha} \rho(d(s, t))$. Thus, $(s, t)$ is returned.
    
    Second, suppose some $(s, t)$ pair's $\alpha(1 + \varepsilon)$-navigability constraint is satisfied. That is, there is some edge, $(s, v^*) \in E$, where $d(v^*, t) < \frac{1}{\alpha(1 + \varepsilon)} d(s, t)$. Since rounded distances are less than a $(1 + \varepsilon)$ factor off from true distances, it follows that $\rho(d(s, t)) \geq \frac{1}{1 + \varepsilon} d(s, t)$. Substituting $d(s, t) \leq (1 + \varepsilon) \rho(d(s, t))$ yields that $d(v^*, t) < \frac{(1 + \varepsilon)}{\alpha(1 + \varepsilon)} \rho(d(s, t)) \leq \frac{1}{\alpha} \rho(d(s, t))$. Thus, $(s, t)$ is not returned.
    
    Thus, we have shown that the set of returned source-target pairs satisfies the properties listed in the above lemma. Finally, the entire procedure takes $O(n^2)$ time to check whether to return each pair, as well as $O(\log_{1 + \varepsilon}(\Delta)) = O(\log(\Delta) / \varepsilon)$ calls to the procedure from \Cref{lem:verify_nav:single_scale}, taking $O(n^\omega \log(\Delta) / \varepsilon)$ time total.
\end{proof}

\newcommand{\VoteCover}{\textsc{VoteCover}\xspace}
\newcommand{\PartialCover}{\textsc{PartialVoteCover}\xspace}
\newcommand{\BatchCover}{\textsc{BatchedVoteCover}\xspace}
\section{Fast Resource Augmented Sparsest $\alpha$-Navigable Graphs}

In this section, we prove our main positive result. Recall from \Cref{subsec:set_cover_perspective} that for each source point $s \in P$, choosing the outgoing neighbors of $s$ is exactly a set cover problem: each possible edge $(s, v)$ covers the targets $t$ for which $d(v,t) < \frac{1}{\alpha} d(s,t)$. Thus, constructing a sparse $\alpha$-navigable graph amounts to solving $n$ set cover instances, one for each possible source.

A direct greedy set cover implementation is too slow. Across all sources, there are $n$ set cover instances, each with $n - 1$ candidate sets and $n - 1$ elements. Explicitly materializing all sets already takes $\Omega(n^3)$ time and space in the worst case. \cite{ConwayEtAl2025} circumvents this cost by avoiding building these sets explicitly. However, it incurs the cost of repeatedly updating the set of uncovered targets. For a source $s$, these updates can take $\widetilde \Omega(n \cdot \OPT_\alpha(s))$ time, and summing over all sources gives the $\widetilde \Omega(n \cdot \OPT_\alpha)$-type bottleneck of prior approaches without resource augmentation.

Our main idea is to separate the tasks of proposing a cover for a set of uncovered points and applying that cover to remove those points. Locally, for each source $s$, we use a voting procedure to choose a batch of promising outgoing edges. Globally, after many such edges have been added across many sources, we use fast matrix multiplication (\VerifyNav) to recompute which source-target constraints remain uncovered. Thus, the expensive materialization step is not performed after each selected edge, but only after a whole batch of updates. We will show that, for each set cover instance, each batch of edges added covers a constant fraction of the remaining constraints, so only $O(\log n)$ such global updates are needed. This is similar to the approach taken by \cite{KhannaPadakiWaingarten:2025}, which also sought to repeatedly cover a constant fraction of constraints, but the method we use to achieve this bears closer resemblance to an algorithm described in \cite{IndykMahabadiRubinfeld:2018}, which computes covers over subsamples of a larger set cover instance.

Recall that the \VerifyNav procedure requires rounding distances into a small number of geometric buckets, after which it can no longer distinguish between distances within a bucket, up to a $(1 + \varepsilon)$ multiplicative factor apart. It can, however, distinguish constraints that are definitely unsatisfied at parameter $\alpha$ from constraints that are definitely satisfied at the stronger parameter $\alpha(1 + \varepsilon)$. This is why our sparsity bound is measured against the optimal $\alpha(1 + \varepsilon)$-navigable graph: the residual instances we continue to solve may erroneously contain points falsely believed by \VerifyNav to be unsatisfied at parameter $\alpha$, and it may not be until the graph is $\alpha(1 + \varepsilon)$-navigable before it correctly indicates that the graph is fully $\alpha$-navigable.

For clarity, we have the following local notation. For $\beta \geq 1$ and source $s \in P$, let
\begin{equation*}
    \OPT_\beta(s) = \min\{|\mathcal{C}| : \mathcal{C} \subseteq P \setminus \{s\} \text{ and } \forall t \in P \setminus \{s\}, \exists v \in \mathcal{C} \text{ such that } d(v, t) < d(s, t) / \beta\} \tag{1}
\end{equation*}
Equivalently, $\OPT_\beta(s)$ is the minimum out-degree needed at $s$ to satisfy all $\beta$-navigability constraints whose source is $s$. The theorem below gives a per-source guarantee, and therefore also implies the corresponding global edge bound of \Cref{thm:main_upper_bound} by summing over all sources.

\begin{theorem}\label{thm:upper_bound}
    Let $P$ be a set of $n$ points with distance function $d$ and aspect ratio $\Delta$, and let $\alpha \geq 1$. For any $\varepsilon > 0$, there is a randomized algorithm running in $\widetilde O(n^\omega \log (\Delta)/\varepsilon)$ time that constructs an $\alpha$-navigable graph over $P$ such that, with high probability, every point $s \in P$ has out-degree $\leq O(\log n) \cdot \OPT_{\alpha(1 + \varepsilon)}(s)$.
\end{theorem}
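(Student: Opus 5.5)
The plan is to run $O(\log n)$ rounds. Each round, for every source $s$ simultaneously, adds a batch of $O(\OPT_{\alpha(1+\varepsilon)}(s))$ edges that with constant probability covers at least half of the residual uncovered targets of $s$. After the round, one call to \VerifyNav (\Cref{lem:verify_nav}) recomputes the residual sets. The residual set of $s$ is taken to be $U_s = \{t : (s,t)\in\mathcal{N}(G)\}$, so $\mathcal{U}_\alpha(G)\subseteq\mathcal{N}(G)\subseteq\mathcal{U}_{\alpha(1+\varepsilon)}(G)$. Termination when all $U_s$ are empty certifies $\alpha$-navigability, since $\mathcal{U}_\alpha(G)\subseteq\mathcal{N}(G)=\emptyset$. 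Every element of $U_s$ is a target not yet $\alpha(1+\varepsilon)$-covered, so an optimal $\alpha(1+\varepsilon)$ cover $\mathcal{C}^*_s$ of size $\OPT_{\alpha(1+\varepsilon)}(s)$ covers $U_s$ using sets with the \emph{stronger} threshold. These sets are subsets of the $\alpha$-sets $S_{s\to v}$ that we query with \MemberOf/\SetOf. Hence $U_s$ is coverable by $k_s:=\OPT_{\alpha(1+\varepsilon)}(s)$ of our $\alpha$-sets. The total cost is $O(\log n)$ \VerifyNav calls, i.e. $\widetilde O(n^\omega\log(\Delta)/\varepsilon)$, plus the per-round local work, which must be kept at $\widetilde O(n^2)$.

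For the local step, I would follow the subsampling idea of \cite{IndykMahabadiRubinfeld:2018}. Since $k_s$ is unknown, guess it by doubling, $k\in\{1,2,4,\dots\}$; each guess multiplies the cost by only a $\log n$ factor. For a guess $k$, sample a uniform multiset $R\subseteq U_s$ of size $m=\Theta(k\log n)$. Compute a greedy cover of $R$ using the \SetOf/\FreqOf structure: for each sampled $t$, enumerate the sets containing it, tally votes, and run greedy on this explicit small instance. Greedy covers $R$ with $O(k\log m)=O(k\log n)$ sets whenever $k\ge k_s$. The standard $\varepsilon$-net/VC-free counting argument then applies: any fixed family of $O(k\log n)$ sets leaving more than $|U_s|/2$ elements of $U_s$ uncovered misses all of $R$ with probability $2^{-m}$. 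A union bound over the at most $n^{O(k\log n)}$ such families requires $m=\Theta(k\log^2 n)$. So I would take $m=\Theta(k\log^2 n)$, which still keeps sizes at $O(k\log n)$ edges and guarantees w.h.p. that the chosen sets cover at least half of $U_s$. We cannot verify the half-coverage without another \VerifyNav, so instead of choosing $k$ adaptively I would add, per round, the cover for the smallest guess $k$ for which greedy on $R$ uses at most $ck\log n$ sets. This $k$ is at most $2k_s$, so the degree added per round is $O(k_s\log n)$.

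The \textbf{main obstacle} is making the degree bound $O(\log n)\cdot k_s$ rather than $O(\log^2 n)\cdot k_s$ over $O(\log n)$ rounds. I would address it by having round $i$ aim to cover all but a $2^{-i}$ fraction only once it is cheap. A cleaner alternative: in each round, produce a cover of size $O(k_s)$ (not $k_s\log n$) that covers a constant fraction of $U_s$. Greedy truncated after $k$ picks on the sample covers a $(1-1/e)$ fraction of $R$, and a union bound over $n^{k}$ families needs only $m=\Theta(k\log n)$ samples to transfer this to a $\ge 1/2$ fraction of $U_s$. With $O(\log n)$ rounds this yields $O(k_s\log n)$ total degree. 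So I would commit to truncated greedy with $k$ picks. For guessing $k$, I would take the smallest doubling guess at which the truncated greedy covers at least $0.6$ of the sample; this guess is at most $2k_s$ and by the union bound transfers to a $\ge 1/2$ fraction of $U_s$ w.h.p. Local work per source is $O(m\cdot n)$ for voting over \SetOf lists, giving $O(k\,n\log n)$ per guess. Since $k\le n$ and $\sum$ over doubling guesses is geometric, this is bounded by $\widetilde O(n^2)$ per source only if we cap $m\le |U_s|$ (sampling all of $U_s$ when small). To keep the total at $\widetilde O(n^{\omega})$ I would bound voting per sampled element by $n$, giving $\widetilde O(n\cdot\min(k\log n,n))$ per source and $\widetilde O(n^3)$ in the worst case. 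This is the remaining gap, and I would close it by charging the voting of all sources in a round to one matrix product: the vote counts $|S_{s\to v}\cap R_s|$ for all $s,v$ are entries of a product of the sample-indicator matrix with the \emph{single-scale} membership matrices, computable via the same $O(\log(\Delta)/\varepsilon)$ rounded-scale products used in \Cref{lem:verify_nav}. This affects only the $(1+\varepsilon)$ slack already absorbed.
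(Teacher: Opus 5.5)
Your degree analysis is sound in outline and parallels the paper's: sample $\Theta(k\log n)$ residual targets, cover a constant fraction of the sample with $O(k)$ sets, transfer to $U_s$ by a union bound over $n^{O(k)}$ families, and refresh all residuals with one \textsc{VerifyNav} per round. However, your fix does not close the running-time gap you flag. A matrix product yields only the \emph{initial} tallies $|S_{s\to v}\cap R_s|$. Greedy, truncated or not, is adaptive: after each pick, every newly covered sample $t$ must be withdrawn from the tally of every set containing it, at cost $\FreqOf(t)$. So the local work is still $\sum_{t\in R_s}\FreqOf(t)$. This can be $\Theta(k_s n\log n)$ per source, i.e.\ $\widetilde O(n\cdot\OPT_{\alpha(1+\varepsilon)})$ and up to $\widetilde O(n^3)$ per round. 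Recomputing the tallies by matrix products after each pick would instead take up to $k_s$ sequential batches of $\widetilde O(n^\omega\log(\Delta)/\varepsilon)$ each.

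The missing idea is the paper's frequency bounding (\Cref{lem:freq_bounding}). For a guess $\opt$, first give each source $\opt\ln n$ uniformly random out-edges and recompute the residuals. Every target lying in at least $\lambda n/\opt$ sets is then covered w.h.p., so each surviving target has frequency $O(n/\opt)$. Voting on $O(\opt\log n)$ samples then costs $O(n\log n)$ per source, hence $\widetilde O(n^2)$ per round, while the random edges add only $O(\opt\log n)$ to the degree. This requires a guess fixed before the rounds begin. That is why the paper uses an outer doubling loop over $\opt$ (resetting $G[i]$ and retiring solved sources) rather than your per-round, per-source choice of $k$.

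There is also a correctness gap. You cover with the $\alpha$-sets $S_{s\to v}$, but progress is measured by \textsc{VerifyNav}. It keeps $(s,t)$ unless some edge $(s,v)$ has $d(v,t)<\rho(d(s,t))/\alpha$, so it is only guaranteed to drop pairs satisfied at $\alpha(1+\varepsilon)$. A target with $\rho(d(s,t))/\alpha\le d(v,t)<d(s,t)/\alpha$ counts as covered in your local step yet is reported again, so $U_s$ need not shrink at all. If a single $v$ $\alpha$-covers all of $U_s$ in this way, your rule selects $k=1$ and adds $v$, and each later round can repeat this without progress. The paper avoids this by running \textsc{PartialVoteCover} on the $\alpha(1+\varepsilon)$-sets $S^{\alpha(1+\varepsilon)}_{i\to j}$. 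The residual $U_s\subseteq\mathcal{U}_{\alpha(1+\varepsilon)}(G)$ is still coverable by $\OPT_{\alpha(1+\varepsilon)}(s)$ of these sets. Anything they cover is guaranteed to leave the residual, since $\mathcal{N}(G)\subseteq\mathcal{U}_{\alpha(1+\varepsilon)}(G)$ and $\mathcal{N}$ only shrinks as edges are added.
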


Before giving the algorithm, we spell out the two ingredients that are combined in the proof.

\paragraph{Voting as approximate greedy set cover}
The procedure \VoteCover, introduced in \cite{ConwayEtAl2025}, can be viewed as a fast randomized simulation of the standard greedy set cover algorithm. The standard greedy algorithm achieves a $\log n$ approximation, but naively applied, leads to $O(n^3)$ time to find a near-optimal $\alpha$-navigable graph. Given a source $s$ and a set $U \subseteq P \setminus \{s\}$ of currently uncovered targets, \VoteCover repeatedly samples uncovered targets from $U$. A sampled target $t$ casts a vote for every candidate edge (i.e., set) $(s, v)$ that would cover it, namely every $v$ satisfying $d(v, t) < d(s, t) / \alpha$. Sets with many votes are good estimators for sets covering many currently uncovered targets, so the algorithm adds the set with the most number of votes (after $\sim \log n$ samples), then explicitly updates $U$ by removing every target now covered by that set.

This explicit update of $U$ is the bottleneck. Without efficient access to the points covered by a given edge, this is done in \VoteCover through a full scan over $U$. Because the size of $U$ does not shrink quickly enough with each new edge, this takes $\Omega(n \cdot \OPT_\alpha(s))$ total time in the worst case. Summed over all sources, this is too expensive for the desired $n^\omega$-time bound -- recalling that $\OPT_\alpha = \sum_{s=1}^n \OPT_\alpha(s)$, we would obtain a running time of $\Omega(n \cdot \OPT_\alpha)$. It was shown in \cite{ConwayEtAl2025} that a preprocessing step could reduce this cost to amortized $\widetilde O(n)$ per source by shrinking the initial size of $U$ when $\alpha = 1$, but the approach does not extend in general to $\alpha > 1$.

\paragraph{Batched verification by matrix multiplication}
To reduce the cost of updating $U$, our algorithm introduces a second main ingredient: the verification routine of \Cref{subsec:verify_nav}.

Suppose first that the distances between input points take only $W$ distinct values. This can be enforced by rounding each continuous distance to the nearest discrete distance scale. For a fixed distance scale, the set of constraints satisfied by the current graph can be found by one Boolean matrix multiplication, as reviewed in \Cref{subsec:verify_nav}. Repeating over all $W$ scales gives a verification procedure running in $\widetilde O(n^\omega W)$ time. By rounding distances geometrically to powers of $(1 + \varepsilon)$, we may take $W = O(\log(\Delta) / \varepsilon)$; this is the source of both the $(1 + \varepsilon)$ resource augmentation slack and the $\log(\Delta) / \varepsilon$ factor in the running time.

Our algorithm uses this routine as a global replacement for the expensive scans inside \VoteCover. Rather than updating the uncovered set after each selected edge, we let every source add a batch of edges, and then use a \VerifyNav routine to refresh all their uncovered sets simultaneously. The rest of this section formalizes this batched voting procedure. In \Cref{sec:partial_cover}, we define \PartialCover, which covers a constant fraction of a given residual instance. In \Cref{sec:batch_cover}, we show how to obtain a fast overall runtime by running this partial cover routine for all sources in parallel and using \VerifyNav to update the residual constraints between batches.

\subsection{Partial Cover}\label{sec:partial_cover}

Here we introduce \PartialCover (\Cref{alg:partial_vote_cover}), which differs from \VoteCover in one fundamental way: given a set $U$ of uncovered elements, \PartialCover is only asked to cover at least $|U|/2$ elements of $U$ rather than all of them. 
The way this manifests itself is that \PartialCover constructs a cover over a sample $K$ of $U$ instead of over $U$ directly. Controlling the size of the sample allows us to bound both the amount of work we do and the number of sets in the partial cover. This approach bears similarity to sampling methods used in prior work on set cover algorithms in sublinear-time models of computation \cite{IndykMahabadiRubinfeld:2018}.

The voting process is largely as in \VoteCover. Each uncovered sample casts votes for the sets containing it until some set receives $c \ln n$ votes, where $c$ is a fixed constant. That set is then added to the cover, and the points that voted for it revert their votes for other sets. This repeats until all samples have voted, and the cover is returned.

\begin{algorithm}[htb]
    \caption{\PartialCover($\mathcal{S}_\beta, U, \opt, C, c$)}
    \label{alg:partial_vote_cover}
    \begin{algorithmic}[1]
        \Require A set system $\mathcal{S}_\beta$, a set of elements $U$ to cover, a guess $\opt$ of $\OPT_\beta(s)$, and constants $C, c > 0$.
        \Ensure A partial cover $\mathcal{C} \subseteq \mathcal{S}_\beta$ that covers at least $|U|/2$ elements from $U$.
        \algrule
        \State $\mathcal{C}, V \gets \varnothing$
        \State $K \gets$ a multiset of $C \cdot \opt \ln n$ independent samples from $U$
        \While{$K \not= \varnothing$}
            \State $v \gets$ element sampled uniformly at random from $K$
            \State $K \gets K \setminus v$
            \If{$v \in \bigcup_{S \in \mathcal{C}} S$}\label{alg:partial_cover:check_if_covered} \Comment{Check if any sets containing $v$ (via \SetOf) are in $\mathcal{C}$}%
                \State \textbf{continue}
            \EndIf
            \State $V \gets V \uplus \{v\}$\label{alg:partial_cover:v_votes}%
            \State $S^* \gets \argmax_{S \in \mathcal{S}_\beta} |V \cap S|$\label{alg:partial_cover:find_largest_set}
            \If{$|V \cap S^*| \geq c \ln n$}\label{alg:partial_cover:vote_threshold}
                \State $\mathcal{C} \gets \mathcal{C} \cup \{S^*\}$
                \State $V \gets V \setminus S^*$\label{alg:partial_cover:undo_votes} \Comment{Remove all votes cast by elements covered by $S^*$, including duplicates}
            \EndIf
        \EndWhile
        \State \Return $\mathcal{C}$
    \end{algorithmic}
\end{algorithm}

\subsubsection{Correctness and success probability}
\smallskip

We now prove that \PartialCover covers at least half of $U$ with high probability. First, we note a few simple properties of the output of \PartialCover.

\begin{observation}\label[observation]{obs:partial_cover_output_size}
    \PartialCover returns a partial cover $\mathcal{C}$ containing no more than $\frac{C}{c} \cdot \opt$ sets.
\end{observation}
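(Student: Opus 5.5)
The plan is a charging argument: each set added to $\mathcal{C}$ is paid for by a disjoint group of at least $c\ln n$ votes, and the total number of votes ever cast is at most $|K|$. Initially $K$ is a multiset of exactly $C\cdot\opt\ln n$ samples. In each iteration of the while loop, one element is removed from $K$, and at most one vote is added to $V$ (line \ref{alg:partial_cover:v_votes}). So the total number of votes ever cast over the whole run is at most $C\cdot \opt\ln n$.

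Next we show that each selected set consumes at least $c\ln n$ of these votes, and that no vote is consumed twice.
\begin{itemize}
\item A set $S^*$ enters $\mathcal{C}$ only when $|V\cap S^*|\ge c\ln n$ (line \ref{alg:partial_cover:vote_threshold}), counting multiplicity, since $V$ is a multiset.
\item Immediately afterwards, line \ref{alg:partial_cover:undo_votes} removes all of $V\cap S^*$ from $V$. So at least $c\ln n$ votes are permanently retired.
\item A retired vote never returns to $V$. Votes only enter $V$ when a fresh element of $K$ is processed, and each element of $K$ is processed once.
\end{itemize}
Hence the vote groups charged to distinct sets of $\mathcal{C}$ are disjoint sub-multisets of the set of all cast votes.

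Summing over the selected sets gives $|\mathcal{C}|\cdot c\ln n \le C\cdot\opt\ln n$, which rearranges to $|\mathcal{C}|\le \frac{C}{c}\cdot\opt$. The bound is deterministic, so no probabilistic reasoning is needed.

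One small subtlety: can a vote be charged to two different sets because it lies in both? No. Charging happens at the moment of selection, and the vote is removed right then. A later set can therefore only count votes still present in $V$.
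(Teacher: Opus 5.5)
Your proposal is correct and matches the paper's argument: each set added to $\mathcal{C}$ permanently retires at least $c\ln n$ votes, and each of the $C\cdot\opt\ln n$ samples in $K$ contributes at most one vote, so at most $\frac{C}{c}\cdot\opt$ sets can be added. Your explicit handling of multiplicities and of why a retired vote can never be charged again makes rigorous what the paper states in a single sentence.
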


\begin{proof}
    Each set added to the cover in \PartialCover consumes $c \ln n$ voters. That is, the $c \ln n$ samples that voted for that set are removed from $V$ and their votes eliminated from all sets. Since the initial number of samples in $K$ is $C \cdot \opt \ln n$, the algorithm can successfully find a set exceeding the vote threshold at most $\frac{C \opt \ln n}{c \ln n} = \frac{C}{c} \cdot \opt$ times.
\end{proof}

\begin{observation}\label[observation]{obs:partial_cover_intermediate_num_votes}
    Before or after any iteration of the while-loop of \Cref{alg:partial_vote_cover}, every set $S \in \mathcal{S}$ satisfies $|V \cap S| < c \ln n$.
\end{observation}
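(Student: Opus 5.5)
I would prove this invariant by induction on the iterations of the while-loop. The claim is that at the start and end of every iteration, every set $S \in \mathcal{S}_\beta$ satisfies $|V \cap S| < c\ln n$, with $V$ counted as a multiset.

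\emph{Base case.} Before the first iteration we have $V = \varnothing$. So $|V \cap S| = 0 < c\ln n$ for every $S$.

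\emph{Inductive step.} Assume the bound holds at the start of an iteration, and check each way the iteration can end.

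\begin{itemize}
\item If the sampled $v$ is already covered, the iteration executes \textbf{continue}. Then $V$ is unchanged and the bound persists.
\item Otherwise $v$ is added to $V$ on line~\ref{alg:partial_cover:v_votes}. This increases $|V \cap S|$ by exactly one for each set $S \ni v$ and leaves all other counts unchanged. By the inductive hypothesis, every count is now at most $c\ln n$, counts being integers (taking $c\ln n$ to be an integer threshold, or using ceilings; the equality case is the same).
\item If no set reaches the threshold on line~\ref{alg:partial_cover:vote_threshold}, then every $|V\cap S|<c\ln n$ and we are done.
\item If the threshold is reached, $S^*$ maximizes $|V\cap S|$, and $|V\cap S^*| \geq c\ln n$. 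Moreover, any set $S$ with $|V\cap S|\geq c\ln n$ must contain $v$, since its count was below $c\ln n$ before $v$ was added.
\end{itemize}

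The main point is the final case: showing that after line~\ref{alg:partial_cover:undo_votes} no other set $S \ni v$ still has $c\ln n$ votes. After $V \gets V\setminus S^*$, which removes every copy of every element of $S^*$, each remaining count is at most its count before $v$ was added. Indeed, $v \in S^*$, so all copies of $v$ are removed, and removals never increase any count. Hence for every $S$ the new count is at most its value at the start of the iteration, which is $< c\ln n$ by hypothesis. In particular $|V\cap S^*| = 0$.

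Note that this uses only that the votes added in the iteration all come from the single element $v$ and that $v\in S^*$. It does not need $S^*$ to be the unique set over threshold, so the argmax tie-breaking is irrelevant. This closes the induction.
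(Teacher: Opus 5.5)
Your proof is correct and follows the same induction as the paper. In both, the only votes added in an iteration come from $v$, and since every set reaching the threshold (in particular $S^*$) contains $v$, subtracting $S^*$ undoes that addition; you phrase this as ``the post-removal $V$ is a sub-multiset of its start-of-iteration value,'' which is a slightly cleaner form of the paper's ``each over-threshold set loses at least the vote from $v$'' and avoids any reliance on $c\ln n$ being an integer or on duplicate samples.
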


\begin{proof}
    We prove this by induction. It is trivially true at the start of the algorithm since $V$ is empty. Now assume that it holds at the start of a given iteration, i.e. $|V \cap S| < c \ln n$ for all $S \in \mathcal{S}$. Observe that on each iteration, $V$ grows by at most one element, $v$, so any $|V \cap S|$ likewise grows by at most one, up to a maximum of $c \ln n$. If any sets $S$ now satisfy $|V \cap S| = c \ln n$, then Line~\ref{alg:partial_cover:vote_threshold} triggers, and one such set is subtracted from $V$. However, all satisfying sets share the element $v$, so this subtraction necessarily removes at least one element, $v$, from all their intersections with $V$. Therefore, all sets once again satisfy $|V \cap S| < c \ln n$.
\end{proof}

\begin{observation}\label[observation]{obs:partial_cover_uncovered_size}
    Of the points initially in $K$, the partial cover $\mathcal{C}$ returned by \PartialCover leaves fewer than $c \cdot \OPT_\beta(s) \ln n$ uncovered. 
\end{observation}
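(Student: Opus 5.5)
The plan is to combine \Cref{obs:partial_cover_intermediate_num_votes} with a pigeonhole argument over an optimal cover. At termination $K=\varnothing$, so every sample initially in $K$ has been drawn exactly once. A drawn sample is skipped at Line~\ref{alg:partial_cover:check_if_covered} only if it is already covered by $\mathcal{C}$. Otherwise it is added to $V$. An uncovered sample is therefore added to $V$ when drawn, and it stays there: Line~\ref{alg:partial_cover:undo_votes} removes from $V$ only elements contained in a set just added to $\mathcal{C}$, and those elements are then covered. Since $\mathcal{C}$ only grows, an element once covered stays covered. It follows that at termination the samples left uncovered by $\mathcal{C}$ are exactly the elements of the final multiset $V$, counted with multiplicity.

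Next I bound $|V|$. Fix an optimal cover $\mathcal{O}\subseteq\mathcal{S}_\beta$ of all targets, with $|\mathcal{O}|=\OPT_\beta(s)$. It covers $P\setminus\{s\}\supseteq U\supseteq V$, so every element of $V$ (with multiplicity) lies in some $S\in\mathcal{O}$. Hence
\[
|V|\le \sum_{S\in\mathcal{O}} |V\cap S|.
\]
By \Cref{obs:partial_cover_intermediate_num_votes}, applied after the final iteration, each term is strictly less than $c\ln n$. This gives $|V|<c\cdot\OPT_\beta(s)\ln n$, which is the claim (assuming $\OPT_\beta(s)\ge 1$; the degenerate case with no targets is trivial).

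The step needing the most care is bookkeeping with multisets. $K$ is a multiset of independent samples, so $V$ may contain duplicates, and $|V\cap S|$ must be read with multiplicity, as the comment at Line~\ref{alg:partial_cover:undo_votes} indicates. I would check that \Cref{obs:partial_cover_intermediate_num_votes} is stated in that multiset sense. I would also check that the claim ``uncovered samples are counted in $V$'' counts duplicates correctly: each copy is drawn separately, and if uncovered at its draw it is added separately. Otherwise the argument is a direct pigeonhole, and it needs no randomness.
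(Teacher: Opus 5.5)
Your proposal is correct and follows the paper's proof: you identify the uncovered samples with the final multiset $V$, then apply pigeonhole over an optimal cover of size $\OPT_\beta(s)$, using \Cref{obs:partial_cover_intermediate_num_votes} to bound each $|V\cap S|$ below $c\ln n$. Your extra justification of why uncovered samples remain in $V$, and your attention to multiplicities, fill in details the paper states only briefly.
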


\begin{proof}
    At the end of \PartialCover, the points initially in $K$ that are not covered by $\mathcal{C}$ are exactly those that remain in $V$. By definition of $\OPT_\beta(s)$, the elements of $V \subseteq U$ can be fully covered by at most $\OPT_\beta(s)$ sets, i.e. by the optimum cover of $U$. However, by \Cref{obs:partial_cover_intermediate_num_votes}, each of the sets in this optimum cover contains fewer than $c \ln n$ of the elements in $V$. Thus, the number of points in $V$ is bounded by $|V| < c \cdot \OPT_\beta(s) \ln n$ via the Pigeonhole Principle.
\end{proof}

Using these observations, we now prove that an appropriate choice of the constants $C$ and $c$ ensures that \PartialCover covers at least half of the elements in $U$ with high probability.

\begin{lemma}\label{lem:batch_cover_fraction}
    If $\OPT_\beta(s) \leq \opt \leq 2\OPT_\beta(s)$, then there are sufficiently large constants $C$ and $c$ such that the partial cover $\mathcal{C}$ on $s$ returned by \PartialCover (\Cref{alg:partial_vote_cover}) covers at least half of $U$, i.e. $|U \setminus \bigcup_{S \in \mathcal{C}^*} S| \leq |U|/2$, with probability $\geq 1 - 1/n$.
\end{lemma}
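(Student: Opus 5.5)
The plan is to reduce the lemma to a tail bound on a union of "bad" partial covers, exploiting the fact that the output cover is always small (\Cref{obs:partial_cover_output_size}) and always leaves few sampled points uncovered (\Cref{obs:partial_cover_uncovered_size}).

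\textbf{Counting bad covers.} Set $k = \frac{C}{c}\cdot\opt \le \frac{2C}{c}\OPT_\beta(s)$. By \Cref{obs:partial_cover_output_size}, the returned $\mathcal{C}$ is one of at most $\binom{n}{\le k} \le n^{k}$ subfamilies of at most $k$ sets. Call such a subfamily $\mathcal{D}$ \emph{bad} if it leaves more than $|U|/2$ elements of $U$ uncovered. The event that \PartialCover fails is contained in the event that some bad $\mathcal{D}$ is returned. By \Cref{obs:partial_cover_uncovered_size}, returning $\mathcal{D}$ forces fewer than $c\,\OPT_\beta(s)\ln n$ of the $m = C\cdot\opt\ln n$ samples in $K$ to land outside $\bigcup\mathcal{D}$.

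\textbf{Tail bound for a fixed bad cover.} The samples are i.i.d.\ uniform from $U$ and are drawn before the algorithm runs. For a fixed bad $\mathcal{D}$, each sample is uncovered by $\mathcal{D}$ with probability at least $1/2$. The number $X$ of samples uncovered by $\mathcal{D}$ therefore stochastically dominates $\mathrm{Bin}(m,1/2)$, with mean at least $\frac{C}{2}\opt\ln n \ge \frac{C}{2}\OPT_\beta(s)\ln n$. Taking $C \ge 4c$, the threshold $c\,\OPT_\beta(s)\ln n$ is at most half this mean. A Chernoff bound then gives
\[
\Pr[X < c\,\OPT_\beta(s)\ln n] \le \exp\!\left(-\tfrac{C}{16}\OPT_\beta(s)\ln n\right) = n^{-\frac{C}{16}\OPT_\beta(s)}.
\]

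\textbf{Union bound and choice of constants.} Summing over the at most $n^{k} \le n^{\frac{2C}{c}\OPT_\beta(s)}$ candidate subfamilies gives failure probability at most $n^{(\frac{2C}{c}-\frac{C}{16})\OPT_\beta(s)}$. Choosing $c \ge 64$ makes the exponent at most $-\frac{C}{32}\OPT_\beta(s)$. Then choosing $C$ large, say $C \ge 32$ in addition to $C \ge 4c$, and using $\OPT_\beta(s)\ge 1$ (which holds whenever $U$ is nonempty; the case $U=\emptyset$ is trivial), the failure probability is at most $1/n$.

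\textbf{Main obstacle.} The key subtlety is that $\mathcal{C}$ depends on the samples, so we cannot directly apply concentration to "the" returned cover. The union bound over all small subfamilies, which is enabled by the deterministic size bound of \Cref{obs:partial_cover_output_size}, is what resolves this. The delicate part is balancing the constants: the union bound costs $n^{\Theta(C/c)\OPT_\beta(s)}$, while the Chernoff bound gains $n^{-\Theta(C)\OPT_\beta(s)}$. This works out only because the threshold in \Cref{obs:partial_cover_uncovered_size} scales with $c$ and the union bound scales with $C/c$, so fixing $c$ large first and then $C\gg c$ succeeds. I would also double-check that the multiset sampling, with duplicates counted in $V$ as in Line~\ref{alg:partial_cover:undo_votes}, keeps \Cref{obs:partial_cover_uncovered_size} valid as a count over samples rather than over distinct points.
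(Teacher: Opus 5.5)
Your proposal is correct and follows essentially the same argument as the paper: a union bound over the at most $n^{R}$ possible outcomes permitted by \Cref{obs:partial_cover_output_size}, combined with a Chernoff lower-tail bound. That tail bound contradicts the deterministic guarantee of \Cref{obs:partial_cover_uncovered_size}, which does hold when samples are counted with multiplicity. The remaining differences are cosmetic. The paper indexes the union bound by the uncovered subset $W$ rather than by the subfamily of sets. It also uses a general-$\delta$ Chernoff bound with $C = 3c$ and $c \ge 84$, where you take $\delta = 1/2$ with $C \ge 4c$ and $c \ge 64$.
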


\begin{proof}
    Assume $\OPT_\beta(s) \leq \opt \leq 2\OPT_\beta(s)$.

    Let $U^*$ be the subset of $U$ not covered by $\mathcal{C}$. By \Cref{obs:partial_cover_output_size}, there are at most $R = \frac{C}{c}\opt \leq \frac{2C}{c}\OPT_\beta(s)$ sets in $\mathcal{C}$. It follows that $U^*$ is generated by removing up to $R$ sets of elements from $U$. Let $\mathcal{W}$ be the collection of all valid subsets of $U$ that can be formed by removing the elements of at most $R$ sets chosen from $\mathcal{S}$. The number of such configurations is bounded by $$|\mathcal{W}| \leq \sum_{i=0}^R \binom{|\mathcal{S}|}{i} \leq \sum_{i=0}^R \binom{n}{i} \leq n^R$$

    Let $W$ be any such subset of $U$ where $|W| > |U|/2$. Our goal is to show that the probability that $U^* = W$ is very small, as this will imply that \PartialCover is unlikely to have returned a $\mathcal{C}$ that leaves $> |U|/2$ elements uncovered. Because $K$ consists of $C \cdot \opt \ln n$ independent uniform samples from $U$, the expected number of samples falling into $W$ is
    $$\E[|W \cap K|] = \frac{|W|}{|U|} C \cdot \opt \ln n > \frac{C}{2} \opt \ln n \geq \frac{C}{2} \OPT_\beta(s) \ln n$$
    
    Since $U^* = W$ would necessarily imply that $|W \cap K| = |U^* \cap K|$, the probability that $U^* = W$ is bounded above by the probability that $|W \cap K| = |U^* \cap K|$. However, notice that $U^* \cap K$ is exactly the set of samples in $K$ not covered by $\mathcal{C}$. By \Cref{obs:partial_cover_uncovered_size}, the number of such samples must be strictly less than $c \cdot \OPT_\beta(s) \ln n$.
    
    Applying a Chernoff bound, the probability that $|W \cap K|$ deviates so far below its expectation is
    \begin{align*}
    \Pr[|W \cap K| < c \cdot \OPT_\beta(s) \ln n] & \leq \Pr\left[|W \cap K| < \left(1 - \frac{C - 2c}{C}\right) \cdot E[|W \cap K|]\right]\\
    & \leq \exp\left( - \frac{(C - 2c)^2}{4C} \OPT_\beta(s) \ln n \right)
    \end{align*}
    
    We now union bound this failure probability over all possible valid configurations $W \in \mathcal{W}$
    $$\Pr\left[\exists W \in \mathcal{W} : |W| > \frac{|U|}{2} \land |W \cap K| < c \cdot \OPT_\beta(s) \ln n\right] \leq n^R \exp\left( - \frac{(C - 2c)^2}{4C} \OPT_\beta(s) \ln n \right)$$
    
    Substituting $R \leq \frac{2C}{c} \OPT_\beta(s)$, this probability is bounded by
    $$\exp\left( \left( \frac{2C}{c} - \frac{(C - 2c)^2}{4C} \right) \OPT_\beta(s) \ln n \right)$$
    
    We can make the exponent arbitrarily negative by choosing $c$ and $C$. For instance, let $C = 3c$. The exponent coefficient simplifies to $\frac{6}{1} - \frac{c^2}{12c} = 6 - \frac{c}{12}$. By choosing $c \geq 84$, the coefficient is less than or equal to $-1$. Thus, the probability of failure is at most $\exp(-\OPT_\beta(s) \ln n) \leq 1/n$.

    This proves that, with high probability, $U^* \not= W$ for any $W$ with $> |U|/2$, i.e., \PartialCover must terminate having covered at least half of the elements in $U$.
\end{proof}

\subsubsection{Running time}

Having shown that \PartialCover manages to cover at least half of the elements in $U$, we now explain how we can execute it efficiently. In particular, we aim to show that it runs in $O(n\log n)$ time with high probability. Towards that end, we begin with the following intermediate result:

\begin{lemma}\label{lem:partial_cover_running_time}
    Consider running \PartialCover on the set cover instance corresponding with source $s$, and let $\FreqOf$ be as defined in \Cref{subsec:set_access_model} for that instance, i.e. for a given element $v$, $\FreqOf(v)$ is the number of sets in the instance which contain $v$. Then \PartialCover runs in time $O(\sum_{v \in K} \FreqOf(v))$, where $K$ is the random sample chosen in the algorithm.
\end{lemma}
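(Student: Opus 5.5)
The plan is an amortized accounting argument: charge every operation of \PartialCover to a (sample, containing set) incidence. The total number of such incidences is exactly $\sum_{v \in K} \FreqOf(v)$. Each element is handled by the $O(1)$-time queries \MemberOf, \FreqOf and \SetOf from \Cref{subsec:set_access_model}.

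\textbf{Data structures.} Fix source $s$. We maintain:
\begin{itemize}
\item a vote counter $\mathrm{cnt}[S]$ for every set $S$, in an array indexed by the candidate $v$, initialized lazily;
\item a boolean array marking which sets are in $\mathcal{C}$;
\item for each set $S$, a list of the samples currently in $V$ that voted for it;
\item a flag per sample recording whether it is still in $V$.
\end{itemize}
Drawing $K$ and then extracting its elements in uniformly random order is equivalent to a random shuffle. This costs $O(|K|)$, which is dominated by the final bound, since $\FreqOf(v) \geq 1$ for every $v \neq s$ (the edge $(s,v)$ itself covers $v$).

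\textbf{Cost when processing a sample $v$.} Enumerate the sets containing $v$ via $\SetOf(v,1),\dots,\SetOf(v,\FreqOf(v))$.
\begin{itemize}
\item Line~\ref{alg:partial_cover:check_if_covered} checks whether any of these sets is marked in $\mathcal{C}$, in $O(\FreqOf(v))$ time.
\item If $v$ is uncovered, Line~\ref{alg:partial_cover:v_votes} increments $\mathrm{cnt}[S]$ for each such $S$ and appends $v$ to their voter lists, again $O(\FreqOf(v))$.
\item For the argmax at Line~\ref{alg:partial_cover:find_largest_set}, we do not scan all $n$ sets. Before $v$'s vote, every count was below $c\ln n$ by \Cref{obs:partial_cover_intermediate_num_votes}. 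So the threshold at Line~\ref{alg:partial_cover:vote_threshold} can only be reached by a set containing $v$, and checking whether some incremented counter reaches $c\ln n$ costs $O(\FreqOf(v))$. If none does, the test fails regardless of which set is the true argmax.
\end{itemize}

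\textbf{Cost of vote removal.} When a set $S^*$ is added, Line~\ref{alg:partial_cover:undo_votes} must remove every voter $u \in V \cap S^*$. We walk $S^*$'s voter list, skipping entries whose flag says they already left $V$. For each live $u$, we clear its flag and decrement $\mathrm{cnt}[S']$ for all $S' \ni u$ via \SetOf, at cost $O(\FreqOf(u))$. Each sample enters $V$ at most once and leaves at most once, so the total removal cost is $O(\sum_{u\in K}\FreqOf(u))$.

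The stale entries skipped while walking voter lists also need accounting. Each list entry was created during some sample's vote and is traversed at most once, because each set is added to $\mathcal{C}$ at most once. These skips are therefore charged to the $O(\FreqOf(v))$ insertion cost.

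\textbf{Conclusion.} Summing the per-sample costs gives $O(\sum_{v\in K}\FreqOf(v))$. One point needs care: duplicate samples in the multiset $K$ must be treated as distinct voters, so that each copy is charged separately.

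The main obstacle is Line~\ref{alg:partial_cover:find_largest_set}, since a literal argmax over $\mathcal{S}_\beta$ would cost $\Theta(n)$ per sample. The resolution is the locality argument above, using \Cref{obs:partial_cover_intermediate_num_votes}.
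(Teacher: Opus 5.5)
Your proposal is correct and follows essentially the same route as the paper. Every operation is charged to a (sample, containing set) incidence, \Cref{obs:partial_cover_intermediate_num_votes} localizes the threshold test at Line~\ref{alg:partial_cover:vote_threshold} to sets containing the newly voting sample, and removals are paid for because each sample leaves $V$ at most once. The differences are implementational (vote counters with lazily skipped voter lists instead of the paper's per-set hash tables for $V \cap S$), and your bookkeeping for lazy initialization, the random extraction order of $K$, and duplicate samples fills in details the paper leaves implicit.
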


This is sufficient to meet the desired $O(n\log n)$ bound if we can show that each element is contained in few sets; this will be addressed in \Cref{lem:freq_bounding}.

\begin{proof}
    The runtime bottlenecks in \PartialCover are Lines ~\ref{alg:partial_cover:check_if_covered} and \ref{alg:partial_cover:vote_threshold}. Recall from \Cref{subsec:set_access_model} that we can answer in constant time any of the following queries: (1) \MemberOf, which answers whether a given set contains a given element; (2) \FreqOf, which returns the number of sets containing a given element; and (3) \SetOf, which returns the $i$th (according to some arbitrary order) set containing a given element. We will just assume we have access to these primitives, but we refer back to \Cref{subsec:set_access_model} for a description of how these primitives can be supported in the set cover instances of a sparsest $\alpha$-navigable graph instance.
    
    \paragraph{Line~\ref{alg:partial_cover:check_if_covered}: check if an element has been covered.}
    We maintain $\mathcal{C}$ in a hash table.
    To check if element $v$ is covered by $\mathcal{C}$, we scan over all possible sets that contain $v$ (by finding the number of sets with \FreqOf and iterating over them with \SetOf) and, for each one, check if it is contained in the hash table representing $\mathcal{C}$. We return \textbf{true} if any such set is in $\mathcal{C}$. We do one constant-time membership query per set containing $v$, so $\leq \FreqOf(v)$ per sample $v$, and $\leq \sum_{v \in K} \FreqOf(v)$ in total.
    
    \paragraph{Line~\ref{alg:partial_cover:find_largest_set}: find the set with the most votes.}
    For each $S \in \mathcal{S}$, we maintain the intersection $V \cap S$ in a hash table. We update all of these hash tables any time $V$ is modified. $V$ is modified on Lines~\ref{alg:partial_cover:v_votes} and \ref{alg:partial_cover:undo_votes}.
    
    Line~\ref{alg:partial_cover:v_votes} adds $v$ to $V$. The only intersections $V \cap S$ that change are those where $v \in S$. We can find these by calling $\SetOf$ operations on $v$. For each such set $S$, we do a single constant-time insertion into its $V \cap S$ hash table, for a maximum of $\sum_{v \in K} \FreqOf(v)$ across all voters.
    
    Line~\ref{alg:partial_cover:undo_votes} runs after a set $S^*$ has been added to $\mathcal{C}$. It subtracts $S^*$ from $V$. The elements we must remove from $V$ are exactly those in the hash table for $V \cap S^*$. Thus, we iterate over $V \cap S^*$, and for each element $v$ in it, we must remove $v$ from all $V \cap S$ hash tables. But note that the sets $S$ where $v \in V \cap S$ are exactly those where $v \in S$. We can iterate through these by calling $\SetOf$ operations on $v$. For each of these, we do a single constant-time hash table deletion. We can bound the total number of deletions by observing that any $v$ can only be removed from $V$ once. Since each such $v$ is removed from $\FreqOf(v)$ hash tables, we perform a maximum of $\sum_{v \in K} \FreqOf(v)$ deletions over the course of \PartialCover.
    
    Having bounded the cost of maintaining the hash tables, how do we efficiently identify the largest (Line~\ref{alg:partial_cover:find_largest_set}) in order to evaluate the condition on Line~\ref{alg:partial_cover:vote_threshold}? We first note that it is only necessary to identify the most voted set if the condition passes. We know from \Cref{obs:partial_cover_intermediate_num_votes} that at the start of each loop iteration, the condition must be false. Thus, it can only become true when a hash table grows. It follows that it is sufficient to check, whenever a hash table grows due to Line~\ref{alg:partial_cover:v_votes}, whether it now has size $c \ln n$, and set $S^*$ to that set if so. Therefore, we can satisfy Lines~\ref{alg:partial_cover:find_largest_set} and \ref{alg:partial_cover:vote_threshold} in time proportional to the number of hash table insertions, which is no more than $\sum_{v \in K} \FreqOf(v)$.\\
    \\
    It follows that \PartialCover runs in $O(\sum_{v \in K} \FreqOf(v))$ time.
\end{proof}

\subsection{$\alpha$-Navigable Graphs in Batches}\label{sec:batch_cover}

Having established that \PartialCover covers at least half of the elements in $U$ when given a valid parameter $\opt$, we now describe how we use it to build a full $\alpha$-navigable graph. It is easy to see that if we can cover a constant fraction of $U$, then we can do so again to cover another constant fraction of the remainder, and that we will eventually cover all of $U$ after at most $\log_2(|U|)$ calls to \PartialCover. Our full algorithm follows this general strategy.

\begin{algorithm}[htb]
    \caption{\BatchCover($P, \alpha, \varepsilon, C, c, \gamma$)}
    \label{alg:batch_cover}
    \begin{algorithmic}[1]
        \Require A set $P$ of $n$ points and constants $\alpha \geq 1$, $\varepsilon, C, c > 0$, and integer $\gamma \geq 0$.
        \Ensure A graph $G$.
        \algrule
        \State $G, U \gets$ an array of $n$ sets
        \State $Q \gets P$
        \For{$\opt \in \{2^1, 2^2, 2^3, \dots\}$}
            \For{$p_i \in Q$}
                \State $G[i] \gets$ a set of $\opt \ln n$ elements sampled from $P$\label{alg:batch_cover:random_sets}
                \State $U[i] \gets P \setminus \{p_i\} \setminus \bigcup_{p_j \in G[i]} S_j$
            \EndFor
            \For{$t \in [\log_2 n + \gamma]$}
                \For{$p_i \in Q$}
                    \State $G[i] \gets G[i] \cup \PartialCover(\{S_{i \to j}^{\alpha(1 + \varepsilon)} : j \in [n]\}, U[i], \opt, C, c)$
                \EndFor
                \State $\mathcal{N} \gets \VerifyNav(G, \alpha, \varepsilon)$
                \For{$p_i \in Q$}
                    \State $U[i] \gets \{t : (i, t) \in \mathcal{N}\}$
                    \If{$U[i] = \varnothing$}
                        $Q \gets Q \setminus \{p_i\}$
                    \EndIf
                \EndFor
            \EndFor
            \If{$Q = \varnothing$}
                \State \Return $G$
            \EndIf
        \EndFor
    \end{algorithmic}
\end{algorithm}

However, it is not so simple. If we wish to run \PartialCover on the remaining uncovered elements, we need to identify which elements of $U$ are still uncovered. This is difficult because, whereas we have efficient access to the sets containing a given element, we do not have efficient access to the elements contained in a given set. The naive approach of iterating over $U$ and removing any elements covered by some set in $\mathcal{C}$ takes too long ($\Omega(n \cdot |\mathcal{C}|)$ time).

This is where we employ the \VerifyNav procedure described in \Cref{subsec:verify_nav}. Recall that given a graph, \VerifyNav takes $O(n^\omega \cdot \log(\Delta) / \varepsilon)$ time to return a list $\mathcal{N}$ of pairs $(s, t)$ that includes any pair whose $\alpha$ constraint is not satisfied and excludes any pair whose $\alpha(1 + \varepsilon)$ constraint is satisfied.

Thus, we can use \VerifyNav to identify all $U$ that remain uncovered for a given set cover instance. Upon obtaining a partial set cover $\mathcal{C}$ for the set cover instance corresponding to point $p_i$, we add to $G$ the edges $(p_i, p_j)$ for all $S_{i \to j} \in \mathcal{C}$. Then, we run \VerifyNav on $G$ to obtain $\mathcal{N}$, and the remaining uncovered points are all those $t$ where $(p_i, t) \in \mathcal{N}$. But note that using \VerifyNav in this way to materialize the remainder of $U$ for only a single set cover instance is wasteful. Instead, we run a single \VerifyNav to materialize the uncovered points for all set cover instances simultaneously.

That is, we execute \PartialCover on each set cover instance, obtaining a partial cover for each. We add the edges corresponding to those covers to an accumulated graph. Then, we run \VerifyNav on the graph to list all remaining unsatisfied constraints. We run \PartialCover again on these remaining constraints. This process repeats roughly $O(\log n)$ times, which we know to be sufficient because each round covers a constant fraction of constraints. An important caveat with this approach is that \PartialCover only succeeds in covering enough constraints when given a parameter $\opt$ close enough to its true optimum cover size $\OPT_{\alpha(1 + \varepsilon)}$. To accomplish this, we run the entire procedure over doubling guesses of $\OPT_{\alpha(1 + \varepsilon)}$, ensuring that we find a serviceable value $\opt$ within $\log(\OPT_{\alpha(1 + \varepsilon)})$ attempts.

Only one detail remains. Recall from \Cref{lem:partial_cover_running_time} that the running time of \PartialCover is proportional to the sum of all frequencies of the elements it samples. It is not hard to see that, across all set cover instances, the frequency of an element is large (roughly $n/2$) on average. Thus, we attach an additional preprocessing step alongside each guess $\opt$ which serves to bound the frequency of each element. Namely, we add $\opt \ln n$ random edges to each point. This covers a large number of constraints -- in particular, it covers all constraints with high frequency. After having added these random edge, we use \VerifyNav to identify the targets that remain uncovered, all of which have low frequencies.

This full algorithm, \BatchCover, is laid out in \Cref{alg:batch_cover}. It remains to be shown that \BatchCover returns a near-optimally-sparse $\alpha$-navigable graph, and that it does so in $\widetilde O(n^\omega \log(\Delta)/\varepsilon)$ time.

\subsubsection{Approximation factor}

Given that \PartialCover covers a constant fraction of the unsatisfied constraints for a given set cover instance, it is easy to see that \BatchCover must successfully solve that set cover instance on or before the first iteration where $\opt$ exceeds the true optimal cover size for that instance.

\begin{observation}\label[observation]{obs:batch_cover_succeeds}
    Given sufficiently-large constants $C$ and $c$, as well as constant $\opt$ satisfying $\OPT_{\alpha(1 + \varepsilon)} < \opt \leq 2\OPT_{\alpha(1 + \varepsilon)}$, \BatchCover (\Cref{alg:batch_cover}) covers $U$ in no more than $\log n + \gamma$ calls to \PartialCover with probability $\geq 1 - O(1/n^{\gamma - 1})$, for any constant $\gamma$.
\end{observation}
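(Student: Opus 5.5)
We fix a source $s$ and write $\beta = \alpha(1+\varepsilon)$. The observation is really a statement about the per-source instance: $\OPT_{\alpha(1+\varepsilon)}$ here means $\OPT_\beta(s)$, and we assume the guess $\opt$ lies in $[\OPT_\beta(s), 2\OPT_\beta(s)]$. The plan is to apply \Cref{lem:batch_cover_fraction} once per round. We then argue that the residual set $U[i]$ shrinks by a factor of two in each successful round, and count how many rounds may fail.

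\textbf{Step 1: the residual stays coverable.} Let $U_t$ be the residual set $U[i]$ at the start of round $t$. This is either the set of targets left uncovered by the random edges, or the set $\{t' : (s,t') \in \mathcal{N}\}$ returned by \VerifyNav. By \Cref{lem:verify_nav}, every element of $U_t$ is a target whose $\beta$-constraint is unsatisfied, so $U_t \subseteq P\setminus\{s\}$. Any subset of the targets is covered by an optimal $\beta$-cover, so $\OPT_\beta$ restricted to $U_t$ is at most $\OPT_\beta(s)$. The proof of \Cref{lem:batch_cover_fraction} uses $\OPT_\beta(s)$ only through this covering property (in \Cref{obs:partial_cover_uncovered_size}). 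The hypothesis $\opt \geq \OPT_\beta(s)$ is used only to lower-bound the expectation $\E[|W\cap K|]$. Hence that lemma applies verbatim to each round: \PartialCover, run with the set system $\mathcal{S}_\beta$, returns sets covering at least $|U_t|/2$ elements of $U_t$, each covered element lying in some $S^{\beta}_{s\to j}$, with probability at least $1-1/n$. This holds conditionally on the entire history, since fresh samples are drawn each round.

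\textbf{Step 2: refreshing preserves progress.} Suppose $t'$ is covered by an added edge $(s,v)$ with $d(v,t') < d(s,t')/\beta$. Then the $\alpha(1+\varepsilon)$-constraint for $(s,t')$ is satisfied, and by \Cref{lem:verify_nav} the pair is excluded from $\mathcal{N}$. Moreover $\mathcal{N}$ is monotone: once an edge is added it remains, so covered pairs never reappear. Therefore $U_{t+1} \subseteq U_t \setminus (\text{covered})$, and in particular $|U_{t+1}| \leq |U_t|/2$ in every successful round. In every round, successful or not, $|U_{t+1}| \leq |U_t|$.

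\textbf{Step 3: counting failures.} We have $|U_0| \leq n$, and the residual is an integer set, so $\lceil\log_2 n\rceil$ successful halvings (with a $+1$ absorbed into $\gamma$) bring it to $\emptyset$. Once the residual is empty, $\mathcal{N}$ contains no pair with source $s$, so $s$ is removed from $Q$ and its cover is complete. Thus the procedure fails to finish within $\log_2 n + \gamma$ rounds only if at least $\gamma$ of these rounds fail. Each round fails with probability at most $1/n$, conditionally on the past. A union bound over the at most $\binom{\log n+\gamma}{\gamma}$ choices of failing rounds gives failure probability at most $\binom{\log n + \gamma}{\gamma} n^{-\gamma} = O((\log n)^\gamma n^{-\gamma})$, which is $O(1/n^{\gamma-1})$ for constant $\gamma$. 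Alternatively, one can strengthen the per-round bound to $1/n^2$ by enlarging $c$ in \Cref{lem:batch_cover_fraction}, since its exponent coefficient $6 - c/12$ can be made $\le -2$; this makes the union bound immediate.

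\textbf{Main obstacle.} The delicate point is Step 1. We must justify reusing \Cref{lem:batch_cover_fraction} when $U_t$ is a residual produced by the lossy \VerifyNav rather than a clean instance. We must also keep the conditioning across rounds legitimate: the lemma's union bound over configurations $W$ is taken for a fixed $U$, and $U_t$ depends only on past randomness. This is handled by conditioning on the history and noting that the sample $K$ drawn in round $t$ is fresh.
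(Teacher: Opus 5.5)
Your proposal is correct and follows the same route as the paper. Apply \Cref{lem:batch_cover_fraction} in each round to get success probability at least $1-1/n$, note that about $\log_2 n$ successful halvings empty the residual, and union bound over the $\binom{\lceil\log_2 n\rceil+\gamma}{\gamma}$ choices of $\gamma$ failing rounds to reach $O(1/n^{\gamma-1})$. Your Steps 1 and 2 make explicit what the paper's short proof takes for granted: each residual is still coverable by $\OPT_\beta(s)$ sets, \VerifyNav excludes every $\beta$-covered target and never reinstates one, and the fresh samples in each round justify conditioning on the history.
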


\begin{proof}
    If a call to \PartialCover covers at least half of $U$, we'll say this call was ``successful''. By \Cref{lem:batch_cover_fraction}, each call to \PartialCover is successful with probability at least $1 - 1/n$ given adequate constants $C$ and $c$. All points in $U$ are covered after no more than $\lceil \log_2 n\rceil + 1$ successful calls. Thus, if we call \PartialCover $\lceil \log_2 n\rceil + \gamma$ times, we can only fail to cover $U$ if the unsuccessful calls number at least $\gamma$.

    By a union bound over the possible sets of $\gamma$ unsuccessful calls, the failure probability of \BatchCover is at most
    $$
        \binom{\lceil \log_2 n\rceil + \gamma}{\gamma}
        \left(\frac{1}{n}\right)^\gamma
        \le
        \left(\frac{e(\lceil \log_2 n\rceil + \gamma)}{\gamma n}\right)^\gamma.
    $$
    In particular, for any constant $\gamma$, this failure probability is
    $$
        O\left(\frac{1}{n^{\gamma - 1}}\right).
    $$
\end{proof}

A bound on the size of the adjacency list constructed for a given point follows directly from \Cref{obs:partial_cover_output_size}.

\begin{lemma}\label{lem:batch_cover_approximation_factor}
    For a given point whose set cover instance has optimal cover size $\OPT_{\alpha(1 + \varepsilon)}$, \BatchCover constructs an adjacency list of size $O(\OPT_{\alpha(1 + \varepsilon)} \log n)$ with probability $\geq 1 - O(1/n^{\gamma - 1})$.
\end{lemma}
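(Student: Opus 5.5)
We bound the size of $G[i]$ for a fixed point $p_i$ by summing its contributions over all doubling guesses of $\opt$ that are used before $p_i$ is removed from $Q$. Write $\OPT = \OPT_{\alpha(1+\varepsilon)}(p_i)$. Let $\opt^*$ be the first guess in the sequence $2^1,2^2,\dots$ with $\opt^* \ge \OPT$, so $\opt^* < 2\OPT$ (or $\opt^* = 2$ when $\OPT\le 1$).

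First we show that $p_i$ leaves $Q$ no later than the round with guess $\opt^*$, except with probability $O(1/n^{\gamma-1})$. In that round, every call to \PartialCover on $p_i$'s residual instance is made with a guess satisfying $\OPT \le \opt^* \le 2\OPT$. The residual $U[i]$ is always the set of $t$ with $(p_i,t)\in\mathcal{N}$. By \Cref{lem:verify_nav} this is contained in $\mathcal{U}_{\alpha(1+\varepsilon)}(G)$, so $U[i]$ can be covered by the optimal $\alpha(1+\varepsilon)$ cover. Hence the hypothesis of \Cref{lem:batch_cover_fraction} holds with $\beta=\alpha(1+\varepsilon)$.

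The sets \PartialCover adds are $\alpha(1+\varepsilon)$-sets. Any element they cover therefore has its $\alpha(1+\varepsilon)$-constraint satisfied, so by \Cref{lem:verify_nav} it is excluded from the next $\mathcal{N}$. The residual thus genuinely halves on each successful call, and \Cref{obs:batch_cover_succeeds} applies: after $\log_2 n+\gamma$ calls, $U[i]=\varnothing$ with probability $1-O(1/n^{\gamma-1})$. Once $U[i]=\varnothing$, $\mathcal{N}$ has no pair with source $p_i$, which by \Cref{lem:verify_nav} means all of $p_i$'s $\alpha$-constraints are satisfied. So $p_i$ is dropped from $Q$ and receives no further edges.

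Next we count edges. In a round with guess $\opt = 2^k$, $G[i]$ is set to $\opt\ln n$ random edges. Then, by \Cref{obs:partial_cover_output_size}, each of the $\log_2 n+\gamma$ calls adds at most $\frac{C}{c}\opt$ edges. This holds deterministically, with no dependence on the guess being accurate. The round therefore contributes $O(\opt\log n)$ edges.

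There is a subtle point here. Line~\ref{alg:batch_cover:random_sets} reassigns $G[i]$ at the start of each round rather than appending to it. So we can either bound only the final round's edges, or, more safely, bound all accumulated edges. Summing the geometric series over $2^1,\dots,\opt^*$ gives $O(\opt^*\log n)=O(\OPT\log n)$ in either case.

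Finally, the failure event is only the event from \Cref{obs:batch_cover_succeeds} in round $\opt^*$. On that event's complement, the bound holds, giving probability $\ge 1-O(1/n^{\gamma-1})$.

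The main obstacle is justifying \Cref{lem:batch_cover_fraction} across successive calls. Each call's residual $U[i]$ depends on the randomness of earlier calls. We handle this by conditioning: the lemma holds for every fixed $U$, and each call draws fresh samples $K$. The per-call success bound $1-1/n$ therefore holds conditionally on the history, which is what the union bound in \Cref{obs:batch_cover_succeeds} needs. If $\OPT$ is $0$ or very small, the exponent in \Cref{lem:batch_cover_fraction} may be too weak. We would handle this by using $\opt\ge 2$ in place of $\OPT$ in the Chernoff step, since the sample size is governed by $\opt$.
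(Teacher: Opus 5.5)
Your proposal is correct and follows the paper's proof. Both pick the doubling round with $\OPT_{\alpha(1+\varepsilon)} \le \opt \le 2\OPT_{\alpha(1+\varepsilon)}$, use \Cref{obs:batch_cover_succeeds} to conclude that $p_i$ leaves $Q$ by the end of that round with probability $1-O(1/n^{\gamma-1})$, and bound the adjacency list by the $\opt\ln n$ random edges plus at most $\frac{C}{c}\opt(\log_2 n+\gamma)$ edges from \Cref{obs:partial_cover_output_size}; the points you add (conditioning on history so \Cref{lem:batch_cover_fraction} composes across calls, the geometric-series count that tolerates the reassignment on Line~\ref{alg:batch_cover:random_sets}, and the check that \VerifyNav drops everything the $\alpha(1+\varepsilon)$-sets cover, which tacitly uses that $\mathcal{N}(G)$ can only shrink as edges are added) are details the paper leaves implicit.
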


\begin{proof}
    Because \Cref{alg:batch_cover} processes $\opt$ in increasing guesses of doubling magnitude, there must be some iteration where $\OPT_{\alpha(1 + \varepsilon)} < \opt \leq 2 \OPT_{\alpha(1 + \varepsilon)}$. If $p_i$ is still unsolved (i.e. contained in $Q$) by this point, then by \Cref{obs:batch_cover_succeeds}, it will be solved on this iteration with probability at least $1 - 1/n^{\gamma - 1}$. Thus, by \Cref{obs:partial_cover_output_size}, with probability $\geq 1 - 1/n^{\gamma - 1}$, $p_i$ is removed from $Q$ having received at most $\frac{C}{c} \cdot \opt \cdot (\log_2 n + \gamma) \leq \frac{2C}{c} \cdot \OPT_{\alpha(1 + \varepsilon)} \cdot (\log_2 n + \gamma)$ edges from \PartialCover.

    Thus, the final adjacency list of $p_i$ is the union of $\opt \ln n \leq 2\OPT_{\alpha(1 + \varepsilon)} \ln n$ random sets assigned by Line~\ref{alg:batch_cover:random_sets} and $\leq \frac{2C}{c} \cdot \OPT_{\alpha(1 + \varepsilon)} \cdot (\log_2 n + \gamma)$ edges returned by \PartialCover. Combining these yields a total of $O(\OPT_{\alpha(1 + \varepsilon)} \log n)$ edges.
\end{proof}

Applying a union bound over all $n$ set cover instances, all adjacency lists are within an $O(\log n)$ factor of optimal with probability $\geq 1 - 1/n^{\gamma - 2}$.

\subsubsection{Running Time}

We now consider the running time of \Cref{alg:batch_cover}.

While \Cref{lem:batch_cover_fraction} establishes that \BatchCover successfully solves each set cover instance within $\log(\OPT)$ attempts, efficiently executing this procedure requires carefully bounding the time spent processing votes. Recall from \Cref{lem:partial_cover_running_time} that \PartialCover spends time proportional to the sum frequency of all sampled elements. To achieve our desired running time, introduce a randomized preprocessing step to each set cover instance to limit the maximum number of sets containing any uncovered element.

\begin{lemma}[Frequency Bounding via Random Sets]\label{lem:freq_bounding}
    Suppose we add a collection $\mathcal{R}$ of $\opt \ln n$ sets chosen uniformly at random to the cover for $p_i$. Let $U$ be the set of elements remaining uncovered. With high probability, no element in $U$ is contained in more than $O(n / \opt)$ sets.
\end{lemma}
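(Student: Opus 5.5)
We fix a single element $t$ of the instance for $p_i$ and show that if $t$ has high frequency, it is very unlikely to survive the random sets. A union bound over all $t$ then finishes the argument.

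Let $f = \FreqOf(t)$ be the number of sets $S_{i\to j}$ containing $t$. Each random set in $\mathcal{R}$ is drawn uniformly from the $n-1$ candidate sets (equivalently, a uniform random point of $P$, ignoring the negligible event of hitting $p_i$). Such a draw misses every set containing $t$ with probability at most $1 - f/(n-1)$. If the $\opt \ln n$ draws are independent, with replacement, as in Line~\ref{alg:batch_cover:random_sets}, then
\[
\Pr[t \in U] \le \left(1 - \frac{f}{n}\right)^{\opt \ln n} \le \exp\left(-\frac{f\,\opt \ln n}{n}\right).
\]
Sampling without replacement only makes this probability smaller, since it is negatively correlated.

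Now fix the threshold $f \ge c' n/\opt$ for a constant $c' \ge 2$. For such $t$, the bound above becomes $\exp(-c' \ln n) = n^{-c'}$. Union bounding over the at most $n$ elements of the instance, the probability that any element with frequency at least $c' n/\opt$ remains in $U$ is at most $n^{1-c'}$. Hence with high probability every $t \in U$ satisfies $\FreqOf(t) < c' n/\opt = O(n/\opt)$.

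Since \Cref{thm:upper_bound} needs this simultaneously for all $n$ sources and all $O(\log n)$ doubling guesses of $\opt$, we take $c' \ge 4$. A further union bound over $n \cdot O(\log n)$ (source, guess) pairs then still leaves failure probability $O(n^{-1})$ or better.

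One subtlety, which I expect to be the only real obstacle, is that $U$ in \BatchCover is computed by \VerifyNav rather than exactly. The frequencies should therefore be measured in the $\alpha(1+\varepsilon)$ set system used by \PartialCover, and $U$ is contained in the set of elements not covered by $\mathcal{R}$ at parameter $\alpha(1+\varepsilon)$. To handle this, I would check the following. By \Cref{lem:verify_nav}, any $t$ covered by some $S^{\alpha(1+\varepsilon)}_{i\to j}$ with $j \in \mathcal{R}$ is excluded from $\mathcal{N}$. So $U \subseteq \{t : \text{no } j\in\mathcal{R} \text{ has } t \in S^{\alpha(1+\varepsilon)}_{i\to j}\}$, and the argument above applies verbatim with $\FreqOf$ taken in the $\alpha(1+\varepsilon)$ system. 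This is exactly the quantity that controls the running time bound of \Cref{lem:partial_cover_running_time}.
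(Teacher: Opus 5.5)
Your proposal is correct and is essentially the paper's proof. An element contained in at least $\lambda n/\opt$ sets survives all $\opt \ln n$ random draws with probability at most $(1-\lambda/\opt)^{\opt\ln n}\le n^{-\lambda}$, and a union bound over the elements finishes. Your extra remarks are sound refinements that the paper leaves implicit: measuring frequencies in the $\alpha(1+\varepsilon)$ set system (justified because the output of \VerifyNav is contained in the pairs unsatisfied at $\alpha(1+\varepsilon)$), and union bounding over all sources and guesses of $\opt$.
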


\begin{proof}
    Consider an element $p$ that is contained in at least $\lambda \cdot n / \opt$ sets for some constant $\lambda$. The probability that a single randomly chosen set covers $p$ is at least $\frac{\lambda \cdot n / \opt}{n} = \frac{\lambda}{\opt}$. If we sample $\opt \ln n)$ random sets, the probability that $p$ is not covered by any of them is at most
    $$\left(1 - \frac{\lambda}{\opt}\right)^{\opt \ln n} \leq \exp(-\lambda \ln n) \leq \frac{1}{n^\lambda}$$
    for an arbitrarily large constant $\lambda$. Taking a union bound over all elements ensures that all high-frequency elements are covered with probability $\geq 1 - 1/n^{\lambda - 1}$, leaving only elements contained in $O(n / \opt)$ sets in $U$.
\end{proof}

With the frequency of the uncovered elements bounded, we can efficiently execute the voting process inside \PartialCover. Since \BatchCover adds $\opt \ln n$ outneighbors to each point $p_i$, it follows that with probability $\geq 1 - 1/n^{\lambda - 2}$ (union bounding across all set cover instances), any uncovered target in any instance is contained in fewer than $\lambda \cdot n / \opt$ sets. We now apply this directly to \Cref{lem:partial_cover_running_time} to achieve the following result:

\begin{corollary}\label[corollary]{cor:partial_cover_random_edges}
    The execution of \PartialCover once for each active set cover instance in \BatchCover takes $\widetilde O(n^2)$ total time.
\end{corollary}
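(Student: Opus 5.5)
We combine \Cref{lem:partial_cover_running_time} with \Cref{lem:freq_bounding}. Fix one round of the inner loop of \BatchCover at a given guess $\opt$. For each active source $p_i \in Q$, \PartialCover is called on the residual set $U[i]$, and by \Cref{lem:partial_cover_running_time} its cost is $O\bigl(\sum_{v \in K_i} \FreqOf_i(v)\bigr)$. Here $K_i$ is the multiset of $C \cdot \opt \ln n$ samples drawn from $U[i]$, and $\FreqOf_i$ is the frequency in the instance for $p_i$ (with sets $S^{\alpha(1+\varepsilon)}_{i\to j}$). So the whole proof reduces to bounding $\FreqOf_i(v)$ for every $v \in U[i]$.

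\textbf{Step 1: frequencies of residual elements stay small.} At the start of the current $\opt$-phase, $G[i]$ is reset to $\opt \ln n$ random out-neighbors. By \Cref{lem:freq_bounding}, applied to each of the $n$ instances and combined with a union bound, with high probability every element left uncovered by these random sets has frequency $O(n/\opt)$. Later residual sets are only subsets of this one. $U[i]$ is recomputed by \VerifyNav, and \VerifyNav only reports pairs not already satisfied at parameter $\alpha(1+\varepsilon)$ by the current edges, which include the random ones. Hence every $v$ that can be sampled has $\FreqOf_i(v) = O(n/\opt)$.

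\textbf{Step 2: the two notions of coverage must match.} I expect this to be the main obstacle. \Cref{lem:freq_bounding} talks about coverage in the set system $\mathcal{S}_{\alpha(1+\varepsilon)}$ that \PartialCover uses, whereas $U[i]$ comes from \VerifyNav, whose output lies between $\mathcal{U}_\alpha$ and $\mathcal{U}_{\alpha(1+\varepsilon)}$. The inclusion $\mathcal{N}(G) \subseteq \mathcal{U}_{\alpha(1+\varepsilon)}(G)$ from \Cref{lem:verify_nav} is what closes the gap. Any target that one of the random sets $S^{\alpha(1+\varepsilon)}_{i\to j}$ covers is a satisfied $\alpha(1+\varepsilon)$-constraint, so \VerifyNav excludes it. Therefore every element of $U[i]$ is uncovered by $\mathcal{R}$ in the $\alpha(1+\varepsilon)$ system, which is exactly where the frequency bound applies.

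For the very first $U[i]$, the algorithm writes $P \setminus \{p_i\} \setminus \bigcup S_j$. I would either interpret this as a \VerifyNav call after adding the random edges, or note that the random sets are explicitly known and can be handled the same way. The cost of that call is absorbed elsewhere and is not counted in this corollary.

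\textbf{Step 3: summing up.} Each call costs
\[
O\bigl(C \cdot \opt \ln n \cdot n/\opt\bigr) = O(n \log n),
\]
so one round over at most $n$ sources costs $O(n^2 \log n) = \widetilde O(n^2)$. Sampling $K_i$ from the explicitly materialized $U[i]$ adds only $O(n)$ per source. If the corollary is meant to cover all rounds and all phases, multiply by the $O(\log n)$ inner rounds and the $O(\log n)$ doubling guesses, since $\opt \le n$. This still gives $\widetilde O(n^2)$, with high probability over the union bound in Step 1.
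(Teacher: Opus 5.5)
Your proposal is correct and follows the paper's argument: plugging the $O(n/\opt)$ frequency bound of \Cref{lem:freq_bounding} into \Cref{lem:partial_cover_running_time} gives $O(C\cdot\opt\ln n\cdot n/\opt)=O(n\log n)$ per active source, hence $\widetilde O(n^2)$ per round. Your Step 2 handles a detail the paper leaves implicit. There you use $\mathcal{N}(G)\subseteq\mathcal{U}_{\alpha(1+\varepsilon)}(G)$ to check that the residual sets produced by \VerifyNav are uncovered by the random edges in the same $\alpha(1+\varepsilon)$ set system in which \PartialCover measures frequencies, and that step is correct.
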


Finally, we can bound the total running time of \BatchCover.

\begin{lemma}\label{lem:batch_cover_running_time}
    The execution of \BatchCover takes $\widetilde O(n^\omega \log(\Delta) / \varepsilon)$ time.
\end{lemma}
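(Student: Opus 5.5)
The plan is to bound the number of iterations of each loop in \BatchCover and then the cost of each iteration. The cost of each iteration has three parts: the random-edge preprocessing, the \PartialCover calls, and the \VerifyNav calls.

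\textbf{Outer loop.} It ranges over guesses $\opt = 2^1, 2^2, \dots$. For every $p_i$ we have $\OPT_{\alpha(1+\varepsilon)}(p_i) \le n-1$. So once $\opt$ exceeds $2n$, \Cref{obs:batch_cover_succeeds} (or simply the fact that $\opt \ln n \ge n$ random edges cover everything) implies that, with high probability, every point has been removed from $Q$. Hence the outer loop runs at most $O(\log n)$ times with high probability.

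\textbf{Inner loop.} For each guess it runs $\log_2 n + \gamma = O(\log n)$ times. Altogether there are $O(\log^2 n)$ rounds, plus one preprocessing step per guess.

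\textbf{Cost per guess and per round.}
\begin{itemize}
\item Sampling the random edges costs $O(n \cdot \opt \ln n) = \widetilde O(n^2)$.
\item Materializing $U[i]$ after the random edges is done with one \VerifyNav call. I would implement the line computing $U[i]$ that way, rather than explicitly through the sets $S_j$.
\item In each round, the \PartialCover calls over all active sources cost $\widetilde O(n^2)$ by \Cref{cor:partial_cover_random_edges}.
\item In each round, the single \VerifyNav call costs $O(n^\omega \log(\Delta)/\varepsilon)$ by \Cref{lem:verify_nav}.
\item Rebuilding the sets $U[i]$ from $\mathcal{N}$ and updating $Q$ costs $O(n^2)$.
\end{itemize}

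\textbf{The \PartialCover step.} This is the step needing care, and I would justify \Cref{cor:partial_cover_random_edges} here. For source $p_i$, the sample $K$ has size $C \cdot \opt \ln n$. By \Cref{lem:freq_bounding}, with high probability every element of $U[i]$ has frequency $O(n/\opt)$. Later rounds only shrink $U[i]$: edges are only added, so $\mathcal{N}$ can only lose pairs, since \VerifyNav depends monotonically on $E$. The frequency bound therefore persists across rounds.

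\Cref{lem:freq_bounding} is stated for frequencies at parameter $\alpha$. \PartialCover, however, uses the set systems $S^{\alpha(1+\varepsilon)}$, and membership of $U[i]$ comes from \VerifyNav. I would handle this by noting two facts. First, $S^{\alpha(1+\varepsilon)}_{i\to j} \subseteq S^{\alpha}_{i\to j}$, so frequencies at parameter $\alpha(1+\varepsilon)$ are no larger. Second, the elements remaining in $\mathcal{N}$ are not covered at parameter $\alpha(1+\varepsilon)$ by the random edges. So I would apply the argument of \Cref{lem:freq_bounding} directly to the $\alpha(1+\varepsilon)$ system: an element of frequency at least $\lambda n/\opt$ in that system is covered at parameter $\alpha(1+\varepsilon)$ with high probability, and is then excluded from $\mathcal{N}$.

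With the frequency bound in hand, \Cref{lem:partial_cover_running_time} gives a cost of $O(|K| \cdot n/\opt) = O(n \log n)$ per source. Summed over $n$ sources, one round of \PartialCover calls costs $O(n^2 \log n)$.

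\textbf{Total.} Summing over all rounds gives
\[
O(\log^2 n)\cdot\bigl(\widetilde O(n^2) + O(n^\omega \log(\Delta)/\varepsilon)\bigr) = \widetilde O(n^\omega \log(\Delta)/\varepsilon),
\]
using $\omega \ge 2$. To this we add the $O(n^2 \log n)$ preprocessing from the set access model lemma and the rounding of distances. The bound holds with high probability, by a union bound over the $O(\log^2 n)$ rounds of the failure events in \Cref{lem:freq_bounding} and \Cref{obs:batch_cover_succeeds}.

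\textbf{Main obstacle.} The main obstacle is ensuring the frequency bound applies to the exact set system and uncovered sets used by \PartialCover, given the $\alpha$ versus $\alpha(1+\varepsilon)$ mismatch introduced by \VerifyNav. Bounding the number of outer iterations by $O(\log n)$ instead of by an unbounded doubling sequence is secondary, since the guess $\opt = 2n$ trivially suffices.
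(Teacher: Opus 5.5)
Your proposal is correct and follows the same route as the paper's proof: $O(\log n)$ doubling guesses of \opt, each with $\log_2 n + \gamma$ rounds whose cost is dominated by one \VerifyNav call at $O(n^\omega \log(\Delta)/\varepsilon)$, versus $\widetilde O(n^2)$ for the \PartialCover calls. Your additions fill in details the paper leaves implicit rather than changing the argument: you compute the initial $U[i]$ with a \VerifyNav call instead of explicit set unions, and you apply the frequency-bounding argument directly to the $\alpha(1+\varepsilon)$ set system, using monotonicity so the bound persists across rounds.
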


\begin{proof}
    From \Cref{cor:partial_cover_random_edges}, we know that one execution of \PartialCover on each set cover instance, after having bounded the frequency each target by means of adding random edges, takes $O(n^2)$ time. We follow this with a call to \VerifyNav, which takes $O(n^\omega \log(\Delta) / \varepsilon)$. This dominates the $O(n^2)$ time for the calls to \PartialCover, as $\omega$ is necessarily at least 2. We perform this combination of these \PartialCover and \VerifyNav up to $\log n + \gamma$ times for a given choice of $\opt$, ensuring that any active set cover instance $s$ where $\opt$ is sufficiently close to $\OPT_{\alpha(1 + \varepsilon)}(s)$ completes. Finally, we run this on up to $\log n$ values of $\opt$ before guaranteeing that $\opt$ is sufficiently close to $\OPT_{\alpha(1 + \varepsilon)}(s)$ for every $s$ at least once. Therefore, the total running time of \BatchCover is $O(n^\omega (\log(\Delta) / \varepsilon)\log^2 n)$.
\end{proof}

The combination of \Cref{lem:batch_cover_running_time} and \Cref{lem:batch_cover_approximation_factor} yields \Cref{thm:upper_bound}.

\subsection{Extension to \(\tau\)-Monotonic Graphs}
\label{subsec:tau_monotonic}

Although we have presented our algorithm in terms of $\alpha$-navigability, the underlying set cover and batched verification framework is not specific to multiplicative progress. We briefly observe that the preceding framework also applies to the construction of sparse $\tau$-monotonic graphs \cite{PengChoiChan:2023,ConwayEtAl2025}. For \(\tau > 0\), a directed graph \(G=(P,E)\) is \emph{\(\tau\)-monotonic} if, for every pair of distinct points \(s,t\in P\), either \((s,t)\in E\), or there exists an edge \((s,v)\in E\) such that
\[
    d(v,t)<d(s,t)-\tau.
\]
Thus, each step makes additive, rather than multiplicative, progress toward the target. The direct-edge alternative is necessary when \(d(s,t)\leq\tau\), since in this case no point can satisfy the displayed inequality.

As with \(\alpha\)-navigability, selecting the out-neighbors of each source \(s\) is a set cover problem. For each potential edge \((s,v)\), define
\[
    T_{s\to v}
    =
    \{t\in P\setminus\{s\}:
      v=t \text{ or } d(v,t)<d(s,t)-\tau\}.
\]
The \MemberOf, \FreqOf, and \SetOf operations from \Cref{subsec:set_access_model} continue to be supported in constant time after \(O(n^2\log n)\) preprocessing. In particular, for a fixed source \(s\) and target \(t\), the points \(v\) satisfying $d(v,t)<d(s,t)-\tau$ form a prefix of the points sorted by distance from \(t\). The length of this prefix can be found by binary search and stored for every pair \((s,t)\). The possible additional set \(T_{s\to t}\), corresponding to the direct edge \((s,t)\), is handled separately.

The other change is that distance discretization is additive. By rescaling the distance function and \(\tau\), assume without loss of generality that the minimum nonzero pairwise distance is \(1\), so the maximum pairwise distance is \(\Delta\). We use the additive scales $r_\ell=\ell\epsilon$ for $\ell=0,\ldots,\left\lceil\frac{\Delta}{\epsilon}\right\rceil$
and round each distance \(x\) down to $\rho(x)=\epsilon\left\lfloor\frac{x}{\epsilon}\right\rfloor$.
Consequently,
\[
    x-\epsilon<\rho(x)\leq x,
\]
and there are \(O(\Delta/\epsilon)\) distance scales.

The corresponding \VerifyNav procedure performs one matrix multiplication for each scale. At scale \(r_\ell\), in the notation of \Cref{lem:verify_nav:single_scale}, we define
\[
    B^{(\ell)}_{j,k}
    =
    \begin{cases}
        1, & j=k \text{ or } d(p_j,p_k)<r_\ell-\tau,\\
        0, & \text{otherwise}.
    \end{cases}
\]
The diagonal entries account for direct edges. For a pair \((s,t)\) whose distance rounds to \(r_\ell\), we report the constraint as unsatisfied if the corresponding entry of \(AB^{(\ell)}\) is zero. If \(\mathcal{U}_\eta(G)\) denotes the set of unsatisfied \(\eta\)-monotonicity constraints, the resulting set \(\mathcal{N}_\tau(G)\) satisfies
\[
    \mathcal{U}_\tau(G)
    \subseteq
    \mathcal{N}_\tau(G)
    \subseteq
    \mathcal{U}_{\tau+\epsilon}(G).
\]
Indeed, rounding down cannot cause an unsatisfied \(\tau\)-constraint to be removed, while any constraint satisfied with additive progress \(\tau+\epsilon\) remains satisfied after its source-target distance is rounded down by less than \(\epsilon\). The procedure uses \(O(\Delta/\epsilon)\) matrix multiplications and therefore runs in \(\widetilde O(n^\omega\Delta/\epsilon)\) time.

We may now apply \BatchCover without further modification.

\begin{corollary}
\label{cor:tau_monotonic}
Let \(P\) be a set of \(n\) points with aspect ratio \(\Delta_P\), after normalizing its minimum nonzero pairwise distance to \(1\). For every \(\tau\geq 0\) and \(0<\epsilon\leq 1\), there is a randomized algorithm running in
\[
    \widetilde O\left(n^\omega\frac{\Delta_P}{\epsilon}\right)
\]
time that constructs a \(\tau\)-monotonic graph \(G\) such that, with high probability, every \(s\in P\) has out-degree at most $O(\OPT_{\tau+\epsilon}(s) \log n)$, where \(\OPT_{\tau+\epsilon}(s)\) is the minimum out-degree required at \(s\) in any \((\tau+\epsilon)\)-monotonic graph. Consequently,
\[
    |E|
    \leq
    O(\log n)\cdot\OPT_{\tau+\epsilon}.
\]
\end{corollary}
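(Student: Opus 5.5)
The plan is to run \BatchCover verbatim, swapping the two problem-specific ingredients. We replace the $\alpha$-navigability sets with the sets $T_{s\to v}$, and we replace \VerifyNav with the additive-scale verifier producing $\mathcal{N}_\tau(G)$. The idea is to check that every lemma used in the proof of \Cref{thm:upper_bound} depends only on abstract properties of the set systems, not on their multiplicative form.

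First, the set-system side.

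\PartialCover and its analysis (\Cref{obs:partial_cover_output_size}, \Cref{obs:partial_cover_intermediate_num_votes}, \Cref{obs:partial_cover_uncovered_size}, \Cref{lem:batch_cover_fraction}) use only three facts: there are at most $n$ sets per instance, $\OPT_\beta(s)$ is the optimal cover size of the instance, and \MemberOf, \FreqOf and \SetOf run in constant time.

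For $\tau$-monotonicity, the sets containing a target $t$ in the instance for $s$ are a prefix of row $t$ of $\Pi$ (those $v$ with $d(v,t)<d(s,t)-\tau$), plus possibly the single direct set $T_{s\to t}$. So \FreqOf is the stored prefix length plus one. \SetOf returns the prefix entries and then the direct edge, taking care not to double count if $t$ already lies in the prefix (it cannot, since $d(t,t)=0<d(s,t)-\tau$ would place it there; in that case we simply omit the extra set). The preprocessing is $O(n^2\log n)$ as before.

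We run \PartialCover on the instance with progress parameter $\tau+\epsilon$, the analogue of $\alpha(1+\varepsilon)$. Then \Cref{lem:batch_cover_fraction} with $\beta=\tau+\epsilon$ gives a cover of at least half of $U$ with probability $\geq 1-1/n$, provided $\opt$ lies within a factor $2$ of $\OPT_{\tau+\epsilon}(s)$.

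Second, the verification side.

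The sandwich $\mathcal{U}_\tau(G)\subseteq\mathcal{N}_\tau(G)\subseteq\mathcal{U}_{\tau+\epsilon}(G)$ plays exactly the role of \Cref{lem:verify_nav}. I would verify it briefly.

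For the first inclusion, suppose $(s,t)$ is reported satisfied. Then either $(s,t)\in E$, or some edge has $d(v,t)<\rho(d(s,t))-\tau\leq d(s,t)-\tau$. In both cases the $\tau$-constraint holds.

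For the second inclusion, suppose $(s,t)$ is $(\tau+\epsilon)$-satisfied. Then either $(s,t)\in E$, which is caught by the diagonal entry, or $d(v,t)<d(s,t)-\tau-\epsilon<\rho(d(s,t))-\tau$.

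Two consequences follow.
\begin{enumerate}
\item Every residual $U[i]$ consists only of targets that are $(\tau+\epsilon)$-uncovered, so it is coverable by $\OPT_{\tau+\epsilon}(s)$ sets. This is what \Cref{obs:partial_cover_uncovered_size} needs.
\item When $Q$ empties, $\mathcal{N}_\tau(G)=\varnothing$, so $G$ is $\tau$-monotonic.
\end{enumerate}
With these, \Cref{obs:batch_cover_succeeds} and \Cref{lem:batch_cover_approximation_factor} carry over unchanged, giving out-degree $O(\OPT_{\tau+\epsilon}(s)\log n)$ per source with high probability. A union bound over sources and summing over $s$ give the edge bound.

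Third, the running time.

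\Cref{lem:freq_bounding} is purely combinatorial: any element lying in at least $\lambda n/\opt$ of at most $n$ sets is hit by the random sets with high probability. The extra direct set only changes constants. Hence \Cref{cor:partial_cover_random_edges} still gives $\widetilde O(n^2)$ per round.

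Each round calls the additive verifier once, costing $O(n^\omega\Delta_P/\epsilon)$ for $O(\Delta_P/\epsilon)$ products. The number of rounds is $O(\log^2 n)$, as in \Cref{lem:batch_cover_running_time}. The total is therefore $\widetilde O(n^\omega\Delta_P/\epsilon)$.

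The main obstacle is the interface between the two sides. I must confirm that the residual sets handed to \PartialCover are always sub-instances of the $(\tau+\epsilon)$ set system, so that the optimum used in \Cref{lem:batch_cover_fraction} is $\OPT_{\tau+\epsilon}(s)$. I must also confirm that the random-edge step is expressed in the same system; this needs the initial $U[i]$ to be computed by the verifier rather than exactly, as \Cref{alg:batch_cover} implicitly requires.

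A second, smaller point is the edge case $\tau=0$ together with distances at most $\tau+\epsilon$. There the only set covering $t$ may be the direct edge. This is harmless because that set is still listed by \SetOf.
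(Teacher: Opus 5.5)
Your proposal is correct and follows the paper's proof. The paper observes that the analysis of \BatchCover needs only constant-time \MemberOf, \FreqOf and \SetOf access, plus a verifier whose output is sandwiched between $\mathcal{U}_\tau(G)$ and $\mathcal{U}_{\tau+\epsilon}(G)$, and then reuses the proof of \Cref{thm:upper_bound} unchanged; you check those same ingredients explicitly. One small remark: every subset of $P\setminus\{s\}$ is coverable by $\OPT_{\tau+\epsilon}(s)$ sets, so the inclusion $\mathcal{N}_\tau(G)\subseteq\mathcal{U}_{\tau+\epsilon}(G)$ is not needed for coverability; what it actually guarantees is progress (targets covered by the chosen $(\tau+\epsilon)$-sets really leave the residual) and the low-frequency bound after the random edges, which is the point you correctly flag at the end.
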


\begin{proof}
The analysis of \BatchCover uses only constant-time access to \MemberOf, \FreqOf, and \SetOf, together with a verification procedure whose output lies between the constraints unsatisfied at the desired parameter and those unsatisfied at the resource-augmented parameter. The set-access construction and the additive version of \VerifyNav above provide these properties for parameters \(\tau\) and \(\tau+\epsilon\), respectively. The remainder of the proof of \Cref{thm:upper_bound} applies unchanged.
\end{proof}
The change from a logarithmic to linear dependency on $\Delta_P$ in the running time when transitioning from $\alpha$-navigability to $\tau$-monotonicity is not surprising, as this reflects the increase in number of relevant distance scales from $O(\log_\alpha(\Delta_P))$ to $O(\Delta_P/\tau)$. In contrast to $\alpha$-navigable graphs, which guarantee a logarithmic search convergence of $O(\log_\alpha(\Delta_P))$ steps, $\tau$-monotonic graphs likewise inherit a linear bound of $O(\Delta_P/\tau)$ steps to converge.

\section{Hardness}

This section addresses two natural questions about the algorithms from the previous sections.

\begin{enumerate}
    \item Can the running time of the resource augmented algorithm be improved?
    The main bottleneck is the subroutine that, at a fixed distance scale \(d_0\), lists all source-target pairs \((s,t)\) whose \(\alpha\)-navigability constraints become satisfied after the newly added edges are inserted. In our current implementation, this subroutine is based on fast matrix multiplication and runs in \(O(n^\omega)\) time. Is there a substantially faster implementation?

    \item Can one eliminate the discretization step and obtain a more efficient exact single-criterion algorithm?
    More precisely, can one efficiently test, in the original metric, which \(\alpha\)-navigability constraints are satisfied by the current edge set? Any \(O(n^{2.5-\varepsilon})\)-time algorithm for this task, for a fixed constant \(\varepsilon > 0\), would improve upon the current worst-case running time of the best known single-criterion algorithm.
\end{enumerate}

We formalize the corresponding testing problems and prove hardness results that answer both questions negatively.

\begin{definition}
Let \(G = (P,E)\) be a directed graph, let \(d\) be a metric on \(P\), let \(\alpha > 1\), and let \(d_0 > 0\).
The \emph{single-scale \(\alpha\)-navigability testing problem} asks to report all pairs \((s,t) \in P \times P\) such that \(d(s,t) = d_0\) and there exists \(u \in P\) with
\[
(s,u) \in E
\qquad\text{and}\qquad
d(u,t) \le \frac{1}{\alpha} d(s,t).
\]
\end{definition}
Question (1) asks whether one can avoid the bottleneck of matrix multiplication in solving this problem.
\Cref{lem:BMMReduction} answers this question in the negative, showing that a reduction exists in the opposite direction, and that this problem is essentially equivalent to Boolean matrix multiplication.
\begin{definition}
Let \(G = (P,E)\) be a directed graph, let \(d\) be a metric on \(P\), and let \(\alpha > 1\).
The \emph{\(\alpha\)-navigability testing problem} asks to report all pairs \((s,t) \in P \times P\) such that there exists \(u \in P\) with
\[
(s,u) \in E
\qquad\text{and}\qquad
d(u,t) \le \frac{1}{\alpha} d(s,t).
\]
\end{definition}
Question (2) asks whether any $O(n^{2.5-\varepsilon})$-time implementation
exists, as such an implementation would improve upon the state of
the art single-criterion approximation.
\Cref{cor:APSP-hardness} answers this question
in the negative by demonstrating APSP-hardness.

\begin{lemma}\label{lem:BMMReduction}
There is a reduction from Boolean matrix multiplication on two \(n \times n\) matrices to single-scale \(\alpha\)-navigability testing that uses one oracle call and $O(n^2)$ additional time.
\end{lemma}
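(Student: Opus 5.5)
The plan is a direct three-layer gadget. Given Boolean matrices $A,B\in\{0,1\}^{n\times n}$, we build a point set $P=S\cup M\cup T$ of $3n$ points, with $S=\{s_1,\dots,s_n\}$ (sources), $M=\{u_1,\dots,u_n\}$ (middles) and $T=\{t_1,\dots,t_n\}$ (targets). The edge set is $E=\{(s_i,u_j): A_{ij}=1\}$. The distances are chosen so that $d(s_i,t_k)=d_0$ for all $i,k$, and so that $d(u_j,t_k)\le d_0/\alpha$ holds exactly when $B_{jk}=1$. Then $(s_i,t_k)$ is reported by the single-scale tester iff some $j$ has $A_{ij}=1$ and $B_{jk}=1$, i.e.\ iff $(AB)_{ik}=1$.

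Concretely, fix $d_0=1$ and $a=1/\alpha<1$, and set:
\begin{itemize}
\item $d(u_j,t_k)=a$ if $B_{jk}=1$, and $d(u_j,t_k)=2a$ otherwise;
\item $d(u_j,u_{j'})=d(t_k,t_{k'})=2a$ for distinct points in the same layer;
\item every pair involving a point of $S$ (that is, $s$--$s'$, $s$--$u$ and $s$--$t$) at distance $d_0=1$;
\item $d(x,x)=0$.
\end{itemize}
Writing down this $3n\times 3n$ table takes $O(n^2)$ time, as does building $E$.

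The main thing to check is that $d$ is a metric, since the problem definition requires one. Every nonzero distance lies in $\{a,2a,1\}$. For a triangle inside $M\cup T$, every side lies in $[a,2a]$, so each side is at most the sum of the other two. For a triangle containing some $s\in S$, the two sides at $s$ both equal $1$, and the third side is at most $\max(2a,1)\le 2$ because $a<1$. These are the only two cases.

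Correctness comes next. Each out-neighbor of $s_i$ is some $u_j$ with $A_{ij}=1$, and $d(u_j,t_k)\le a=d_0/\alpha$ iff $B_{jk}=1$, since $2a>a$. Hence $(s_i,t_k)$ qualifies iff $(AB)_{ik}=1$.

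The oracle may also report pairs at distance $d_0$ that are not source--target pairs: pairs among $S$, pairs $(u,s)$, or pairs $(t,s)$ (and $(u,t)$ pairs if $2a=d_0$, i.e.\ $\alpha=2$). However, $M$ and $T$ have no outgoing edges, so the only reportable pairs start in $S$. We simply keep the reported pairs of the form $(s_i,t_k)$ and set $C_{ik}=1$ for them, which takes $O(n^2)$ time. The whole reduction therefore uses one oracle call and $O(n^2)$ extra time.

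The only delicate step is the metric verification. A naive shortest-path construction would require computing the closure, and that closure already encodes products of $B$. Pinning all distances into the narrow set $\{a,2a,1\}$ avoids this. If one prefers $d(s,t)=d_0$ to be the unique distance equal to $d_0$, one can perturb the $s$--$s$ and $s$--$u$ distances slightly within $[2a/2,\,2]$ to keep them off $d_0$. This is not needed, though, because of the filtering step.
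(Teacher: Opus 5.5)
Your reduction is correct and uses the same gadget as the paper. It has layers $S,U,T$, edges $(s_i,u_k)$ for $A_{i,k}=1$, every $s_i$--$t_j$ distance equal to $d_0=1$, and $d(u_k,t_j)\le 1/\alpha$ iff $B_{k,j}=1$, all queried with one oracle call.

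The difference is in how the metric is realized. The paper takes the shortest-path metric of a weighted graph $H$, with weight-$1$ edges $\{s_i,t_j\}$ and weight-$1/\alpha$ edges $\{u_k,t_j\}$ for $B_{k,j}=1$. This makes the triangle inequality automatic. It also makes $S$--$T$ the only pairs at distance $1$ from a point of $S$, since $d(s_i,s_{i'})=2$ and $d(s_i,u_k)=1+1/\alpha$, so no filtering is needed. However, that closure is not obviously computable in $O(n^2)$ time: for instance, $d(u_k,u_{k'})=2/\alpha$ exactly when $(BB^{T})_{k,k'}=1$. It is also infinite whenever a row of $B$ is zero. As you point out, this matters for the claimed $O(n^2)$ overhead.

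Your explicit $\{a,2a,1\}$-valued table avoids both issues. It is a finite metric whose entries are read off $B$ in $O(1)$ time each. Your two-case triangle-inequality check is complete, since the remaining inequalities of the form $1\le 1+d(x,y)$ are trivial. The price is that extra pairs such as $(s_i,u_j)$ with $A_{i,j}=1$ also get reported, and your filtering step correctly discards them.

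One minor slip, irrelevant to the proof: the optional perturbation range is misstated. For example, $d(s,u)=2$ would violate $d(s,u)\le d(s,t)+d(t,u)\le 1+2a$ whenever $a<1/2$, so only small perturbations around $1$ are safe.
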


\begin{proof}
Let \(A,B \in \{0,1\}^{n \times n}\). We construct an instance of single-scale \(\alpha\)-navigability testing as follows.

Let \(P = S \uplus U \uplus T\), where \(S = \{s_1,\dots,s_n\}\), \(U = \{u_1,\dots,u_n\}\), and \(T = \{t_1,\dots,t_n\}\).

We first define a metric \(d\) on \(P\). Let \(H\) be the undirected weighted graph on vertex set \(P\) containing the following edges:
\begin{itemize}
    \item for every \(i,j \in [n]\), an edge \(\{s_i,t_j\}\) of weight \(1\);
    \item for every \(k,j \in [n]\) with \(B_{k,j} = 1\), an edge \(\{u_k,t_j\}\) of weight \(1/\alpha\).
\end{itemize}
Let \(d\) be the shortest-path metric of \(H\).

Next, define a directed graph \(G = (P,E)\) by adding the edge \((s_i,u_k)\) to \(E\) if and only if \(A_{i,k} = 1\).

We claim that one call to the single-scale \(\alpha\)-navigability testing oracle with parameter \(d_0 = 1\) suffices to compute the Boolean product \(AB\).

First observe that in $H$, for every \(i,j \in [n]\), the direct edge \(\{s_i,t_j\}\) has weight \(1\). This edge realizes the shortest-path distance, since alternative path from \(s_i\) to \(t_j\) has length greater than 1, as there are no direct edges between $S$ and $U$. Hence, \(d(s_i,t_j) = 1\).

Therefore, for every \(i,j \in [n]\), the oracle reports \((s_i,t_j)\) if and only if there exists \(k \in [n]\) such that
\[
(s_i,u_k) \in E \iff A_{i,k}=1
\qquad\text{and}\qquad
d(u_k,t_j) \le \frac{1}{\alpha}\cdot d(s_i,t_j) = \frac{1}{\alpha}.
\]
Further note, for every \(k,j \in [n]\),
\[
d(u_k,t_j) \le \frac{1}{\alpha}
\qquad\Longleftrightarrow\qquad
B_{k,j} = 1,
\]
because the only possible path in $H$ of weight $\leq\frac{1}{\alpha}$ between $u_k$ and $t_j$ in $H$, is the direct edge $(u_k,t_j)$ which is present if and only if $B_{k,j}=1$.
Overall, a pair \((s_i,t_j)\) is reported if and only if there exists \(k \in [n]\) such that \(A_{i,k}=B_{k,j}=1\).
\end{proof}

As a first step towards the proof of \cref{cor:APSP-hardness} we define the problem of Boolean min-plus product and prove that it has equivalent running time to $\alpha$-navigability testing up to polylogarithmic factors.
\begin{definition}
Let \(M\) be an \(n \times n\) matrix over \(\mathbb{R}_{\ge 0}\), and let \(N\) be an \(n \times n\) Boolean matrix.
Their \emph{Boolean min-plus product} is the matrix \(M \star N\) defined by
\[
[M \star N]_{i,j}
    = \min \{ M_{i,k} : N_{k,j} = 1 \},
\]
with the convention that the minimum over the empty set is \(+\infty\).
\end{definition}
Equivalently, one may regard \(N\) as a matrix over \(\{0,\infty\}\), replacing each \(1\) by \(0\) and each \(0\) by \(\infty\), and then take the usual min-plus product.

This is a problem that has been used as a subroutine to compute node-weighted APSP in previous works by Chan, Yuster, and Abboud et al most recently (see "All-Pairs Shortest Paths with Few Weights per Node
" for more details). It's hardness was studied in the recent "Universe Reduction for APSP: Equivalence of Three Fine-Grained Hypotheses", which demonstrated APSP-hardness.

\begin{lemma}\label{lem:boolean-min-plus-hard}[Lemma 6.3 from Universe Reduction for APSP: Equivalence of Three Fine-Grained Hypotheses]
For every \(\varepsilon>0\), there does not exist an \(O(n^{2.5-\varepsilon})\) algorithm for Boolean min-plus product on \(n\times n\) matrices, unless the APSP conjecture is falsified.
\end{lemma}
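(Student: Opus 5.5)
Since this lemma is imported from the cited universe-reduction work, I would not build a fresh fine-grained hierarchy. The plan is to chain two ingredients: the equivalence theorem from that paper, and an elementary \emph{value-splitting} reduction from small-universe min-plus product to Boolean min-plus product. Throughout, ``APSP conjecture'' means that APSP, equivalently \((\min,+)\)-product of two \(n\times n\) matrices, admits no \(O(n^{3-\delta})\) algorithm.

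\textbf{Step 1 (universe reduction).} I would invoke the main theorem of the cited paper in the following form. Under the APSP conjecture, the \((\min,+)\)-product \(A\star' B\) of \(n\times n\) integer matrices cannot be computed in \(O(n^{3-\delta})\) time even when one factor, say \(B\), has entries in \(\{0,1,\dots,u\}\cup\{\infty\}\) with \(u=O(n^{1/2})\). This is the step I expect to be the main obstacle. The content of that paper is exactly showing that large weights can be compressed to a polynomially small universe without losing cubic hardness, and it relies on nontrivial additive-combinatorics or hashing machinery. I would cite it rather than reprove it. What I need to check is that their universe bound can indeed be taken as small as \(n^{1/2}\), or more generally \(n^{c}\) with the matching exponent \(3-c\). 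If their theorem is stated for rectangular or unbalanced instances, I would pad the instance to make it square.

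\textbf{Step 2 (value splitting).} For each value \(b\in\{0,\dots,u\}\), define the Boolean matrix \(N^{(b)}_{k,j}=[B_{k,j}=b]\). Then
\[
(A\star' B)_{i,j}=\min_{0\le b\le u}\bigl([A\star N^{(b)}]_{i,j}+b\bigr),
\]
because every \(k\) with \(B_{k,j}<\infty\) contributes \(A_{i,k}+B_{k,j}\) in exactly one term. Negative entries of \(A\) are handled by adding a common shift, which is harmless since \(M\) in the Boolean min-plus product must be nonnegative. This reduction uses \(u+1\) Boolean min-plus products plus \(O(un^2)\) extra work.

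\textbf{Step 3 (conclusion).} Suppose Boolean min-plus product ran in \(O(n^{2.5-\varepsilon})\) time. Then Step 2 with \(u=O(n^{1/2})\) would solve the small-universe product in \(O(n^{3-\varepsilon})\) time. By Step 1, this would give APSP in truly subcubic time, contradicting the APSP conjecture. The exponent \(2.5\) is precisely the tradeoff between the universe size \(n^{1/2}\) and the cubic barrier. So the remaining care is entirely in matching the parameters of the cited theorem.
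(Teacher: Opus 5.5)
\textbf{Your Step~1 is false in the form you need, and the value-splitting route cannot give the exponent $2.5$.} The paper does not prove this statement; it imports Lemma~6.3 of the universe-reduction paper verbatim. Your Step~2 (value splitting) is correct. Step~1, however, is not merely unverified: it cannot hold unless the APSP conjecture fails. The $(\min,+)$-product of an arbitrary integer matrix $A$ with a matrix $B$ whose entries lie in $\{0,\dots,u\}\cup\{\infty\}$ is computable in $\widetilde O(\sqrt{u}\,n^{(3+\omega)/2})$ time, as follows.

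Write $A_{ik}=u\,q_{ik}+r_{ik}$ with $0\le r_{ik}<u$. Compute $Q_{ij}=\min\{q_{ik}:B_{kj}<\infty\}$ with one Boolean min-plus product, in $\widetilde O(n^{(3+\omega)/2})$ time. Since $r_{ik}+B_{kj}\in[0,2u)$, the optimum for $(i,j)$ is attained by some $k$ with $q_{ik}\in\{Q_{ij},Q_{ij}+1\}$. Group the indices $k$ of each row $i$ by the value of $q_{ik}$. Groups of size at most $g$ are scanned directly, in $O(n^2g)$ total time. Each row has at most $n/g$ larger groups. For $t=1,\dots,n/g$, let $A^{(t)}_{ik}=r_{ik}$ if $k$ lies in the $t$-th large group of row $i$, and $\infty$ otherwise; one bounded-entry product $A^{(t)}\star' B$ then costs $\widetilde O(u\,n^{\omega})$. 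Choosing $g=\sqrt{u\,n^{\omega-1}}$ gives
\[
\widetilde O\bigl(n^2g+(n/g)\,u\,n^{\omega}\bigr)=\widetilde O\bigl(\sqrt{u}\,n^{(3+\omega)/2}\bigr).
\]
For $u=n^{1/2}$ this is $O(n^{2.94})$, or $\widetilde O(n^{2.75})$ if $\omega=2$.

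Consequently, this asymmetric product can be cubic-hard only when $u\ge n^{3-\omega-o(1)}$, the same threshold as in the Strong APSP hypothesis. At such universe sizes, your value splitting turns an $n^{3-o(1)}$ bound into at most an $n^{\omega+o(1)}$ lower bound for Boolean min-plus product. That is below $n^{2.5}$ for $\omega<2.372$, and trivial if $\omega=2$. So the claim that $2.5$ is ``the tradeoff between the universe size $n^{1/2}$ and the cubic barrier'' is exactly the step that fails. The exponent $2.5$ is $(3+\omega)/2$ at $\omega=2$, which comes from the row-sorting and blocking algorithm for Boolean min-plus. No reduction that spends one Boolean product per distinct weight value of $B$ can match it. A correct derivation needs a genuinely more economical reduction, presumably one exploiting that the factor $M$ may carry arbitrary weights. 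Without that argument, the right justification here is what the paper does: cite Lemma~6.3 of the universe-reduction paper directly.
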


For the purpose of proving $\alpha$-navigability testing is equivalent to Boolean min-plus product computation, we define an intermediate problem.
\begin{definition}
Let \(A \in \{0,1\}^{n \times n}\) and let \(B,C \in \mathbb{R}_{\ge 0}^{n \times n}\).
The \emph{Boolean min-plus threshold testing problem} asks to report all pairs \((i,j) \in [n] \times [n]\) such that
\[
\min \{ B_{k,j} : A_{i,k} = 1 \} \le C_{i,j},
\]
again with the convention that the minimum over the empty set is \(+\infty\).
\end{definition}

\begin{lemma}\label{lem:bmptt-equivalence}
Boolean min-plus product and Boolean min-plus threshold testing are equivalent up to polylogarithmic overhead. More precisely, if one of the two problems on \(n \times n\) matrices can be solved in time \(T(n)\), then the other can be solved in time \(\widetilde O(T(n) + n^2)\).
\end{lemma}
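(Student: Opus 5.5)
The proof has two directions.

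\emph{Product to testing.} Suppose we can compute $M\star N$ in time $T(n)$, and we are given an instance $(A,B,C)$ of threshold testing. Note that $\min\{B_{k,j}:A_{i,k}=1\}$ is an ``$A$-indexed'' minimum over the rows of $B$. Transposing puts it in the form of the definition: $[B^\top \star A^\top]_{j,i}=\min\{B^\top_{j,k}:A^\top_{k,i}=1\}=\min\{B_{k,j}:A_{i,k}=1\}$. So one call with $M=B^\top$ and $N=A^\top$ gives every value $D_{i,j}=\min\{B_{k,j}:A_{i,k}=1\}$. We then report $(i,j)$ exactly when $D_{i,j}\le C_{i,j}$, an $O(n^2)$ scan. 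The total is $O(T(n)+n^2)$ with no logarithmic overhead.

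\emph{Testing to product.} This is the harder direction, because a threshold oracle only answers ``is the minimum at most a given value''. The plan is a simultaneous binary search, run independently for every entry. Given $M$ and $N$, set $D_{i,j}=[M\star N]_{i,j}$. Each finite $D_{i,j}$ is one of the $n^2$ entries of $M$, so we first sort all entries of $M$ into a list $x_1\le\dots\le x_m$ with $m\le n^2$, in $O(n^2\log n)$ time. For each pair $(i,j)$ we keep an index interval $[\mathrm{lo}_{i,j},\mathrm{hi}_{i,j}]$ into this list, representing the interval of candidate values in which $D_{i,j}$ must lie.

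In each round we build a threshold matrix $C_{i,j}=x_{\mathrm{mid}_{i,j}}$ and call the testing oracle once on $A=N^\top$ and $B=M^\top$. By the same transpose identity, $\min\{B_{k,j}:A_{i,k}=1\}=\min\{M_{j,k}:N_{k,i}=1\}=D_{j,i}$, so we index $C$ to match (up to transposition). The oracle's answer tells us, for every pair at once, whether $D\le x_{\mathrm{mid}}$, and we halve that pair's interval accordingly. After $\lceil\log_2(m+1)\rceil=O(\log n)$ rounds every interval has length one.

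The infinite entries are handled by a first test at threshold $x_m=\max M$. A pair is reported in that test iff its minimum is over a nonempty set, so the unreported pairs are exactly those with $D=+\infty$. Each round costs $T(n)+O(n^2)$, for a total of $\widetilde O(T(n)+n^2)$.

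The main care point is the indexing: the problem allows a different threshold for every pair, and everything rests on that. It is what lets all $n^2$ binary searches share a single oracle call per round, instead of one call per candidate value.
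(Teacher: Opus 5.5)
Your proof is correct and follows the paper's argument. The easy direction uses the same transpose identity $D=(B^\top\star A^\top)^\top$ followed by an entrywise comparison, and the hard direction uses the same parallel binary search, with one threshold-oracle call on $(N^\top, M^\top,\cdot)$ per round and per-entry thresholds; the only minor difference is that the paper first replaces each row of $M$ by its within-row ranks, searches over $[n]\cup\{\infty\}$ and maps ranks back to values, whereas you search directly over the globally sorted entries of $M$ (still $O(\log n)$ rounds) and handle the $+\infty$ entries with an explicit initial test.
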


\begin{proof}
We first reduce Boolean min-plus product to threshold testing.

Let \(M \in \mathbb{R}_{\ge 0}^{n \times n}\) and \(N \in \{0,1\}^{n \times n}\), our goal is to compute \(M \star N\).
For each row \(i\), replace the entries of \(M\) by their ranks within that row. Formally, define \(R \in [n]^{n \times n}\) by
\[
R_{i,k} = \bigl| \{k' \in [n] : M_{i,k'} \le M_{i,k} \} \bigr|.
\]
Then, for every \(i,j\), the value \([R \star N]_{i,j}\) is exactly the rank, within row \(i\), of the minimum entry among the columns \(k\) with \(N_{k,j}=1\), which makes recovering \(M\star N\) from \(R\star N\) straight forward in \(\widetilde{O}(n^2)\) time.

It therefore suffices to compute \(R \star N\), and this can be done using \(O(\log n)\) calls to the threshold-testing oracle by parallel binary search on all entries simultaneously. More precisely, a threshold matrix $C$ is formed, and a search space for each coordinate is maintained (initialized to \([n]\)). At each step, we query threshold testing on \((N^T,R^T,C^T)\). 
Following this call, for each pair \(i,j\) we learn whether:
\[
\min \{ [R^T]_{k,i}=R_{i,k} : [N^T]_{j,k}=N_{k,j} = 1 \} \le [C^T]_{j,i} \iff [R\star N]_{i,j}\leq C_{i,j}
\]
Based on the answer, for each \(i,j\) we update the search space and \(C_{i,j}\) accordingly before the next query. Observe that the entries of \([R\star N]\) are in \([n]\cup \{\infty\}\) and thus the searches converge in \(log(n)\) steps.

Conversely, suppose we are given an oracle for Boolean min-plus product, and let \(A \in \{0,1\}^{n \times n}\) and \(B,C \in \mathbb{R}_{\ge 0}^{n \times n}\) be an instance of threshold testing. Compute \(D = (B^T \star A^T)^T\).
Then, for every \(i,j\),
\[
D_{i,j}
    = \min \{ B_{k,j} : A_{i,k} = 1 \}.
\]
Hence we can obtain the answer by comparing \(D_{i,j}\) with \(C_{i,j}\) entrywise.
\end{proof}

We next relate threshold testing to \(\alpha\)-navigability testing.

\begin{lemma}\label{lem:threshold-to-nav}
Boolean min-plus threshold testing and \(\alpha\)-navigability testing have equivalent running times - if one of the two can be solved in \(T(n)\) time, the other can be solved in \(T(n)+O(n^2)\) time.
\end{lemma}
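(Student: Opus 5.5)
Both directions are a direct translation, each costing $O(n^2)$ extra work around a single oracle call. The direction from \(\alpha\)-navigability testing to threshold testing is immediate. The reverse direction needs a point set whose distances encode arbitrary matrices \(B,C\) while still forming a metric; realizing that metric is the only real work.

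\emph{From navigability testing to threshold testing.} Index \(P=\{p_1,\dots,p_n\}\) and take \(A\) to be the adjacency matrix of \(G\). Set \(B_{k,j}=d(p_k,p_j)\) and \(C_{i,j}=d(p_i,p_j)/\alpha\). Then \(\min\{B_{k,j}: A_{i,k}=1\}\le C_{i,j}\) holds exactly when some edge \((p_i,p_k)\in E\) has \(d(p_k,p_j)\le d(p_i,p_j)/\alpha\). So the threshold oracle's output is precisely the required list of pairs. Building the three matrices takes \(O(n^2)\) distance queries.

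\emph{From threshold testing to navigability testing.} Given \(A\in\{0,1\}^{n\times n}\) and \(B,C\ge 0\), I would mimic the construction in \Cref{lem:BMMReduction}. Let \(P=S\uplus U\uplus T\) with \(|S|=|U|=|T|=n\), and put the edge \((s_i,u_k)\) in \(E\) iff \(A_{i,k}=1\). The goal is
\[
d(u_k,t_j)=L+B_{k,j}\qquad\text{and}\qquad d(s_i,t_j)=\alpha\bigl(L+C_{i,j}\bigr),
\]
where \(L\) is a large offset. With these values, \(d(u_k,t_j)\le d(s_i,t_j)/\alpha\iff B_{k,j}\le C_{i,j}\). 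Since edges leave only \(S\) and go only to \(U\), a reported pair \((s_i,t_j)\) is witnessed exactly by a \(k\) with \(A_{i,k}=1\) and \(B_{k,j}\le C_{i,j}\). The oracle also returns some pairs \((s_i,x)\) with \(x\notin T\); we discard these in \(O(n^2)\) time. Rows of \(A\) with no \(1\)s give empty minima (value \(+\infty\)), and such rows correctly produce no edges and hence no reports.

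Two preliminary normalizations keep the numbers under control. First, entries of \(C\) larger than \(\max B\) can be clamped to \(\max B\), since this does not change any comparison. Second, we may assume \(B\) and \(C\) take values in \([0,W]\) for some \(W\), and we then choose \(L\gg W\), say \(L\ge 10\alpha W\).

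The main obstacle is the triangle inequality, because \(\alpha\) may be much larger than \(2\), so the \(s\)--\(t\) distances near \(\alpha L\) can exceed twice the \(u\)--\(t\) distances near \(L\). I would therefore set the remaining distances scale-consistently:
\begin{itemize}
\item \(d(t_j,t_{j'})=d(u_k,u_{k'})=2L\);
\item \(d(s_i,s_{i'})=2\alpha L\);
\item \(d(s_i,u_k)=\alpha L+2L\), a value lying between \(\alpha(L+W)-(L+W)\) and \(\alpha L+L\) once \(L\) is large relative to \(W\).
\end{itemize}
I would then check every triangle type: \(SSS\), \(UUU\), \(TTT\), \(SST\), \(STT\), \(UUT\), \(UTT\), \(SUT\), \(SSU\), \(SUU\). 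Each check reduces to an inequality of the form \(a\le b+c\) in which the slack is \(\Theta(L)\) and the perturbations are \(O(\alpha W)\). For instance, in type \(STT\) we need \(\alpha(L+C)\le \alpha(L+C')+2L\), which holds because \(\alpha W\le 2L\). The most delicate case is \(SUT\), where \(d(s,t)\le d(s,u)+d(u,t)\) and \(d(s,u)\le d(s,t)+d(u,t)\) pin down the admissible range for \(d(s,u)\); the choice above lies in that range.

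If some inequality in this check fails, I would adjust the filler constants, which are free parameters, rather than change the construction. All distances are positive and symmetric, so \(d\) is a metric, the instance is built in \(O(n^2)\) time, and one oracle call suffices, giving the claimed \(T(n)+O(n^2)\) bound.
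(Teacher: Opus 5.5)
Your encoding is the same as the paper's. The forward direction is identical. In the reverse direction you also take edges $(s_i,u_k)$ from $A$ and set $d(u_k,t_j)=L+B_{k,j}$ and $d(s_i,t_j)=\alpha(L+C_{i,j})$ with a large offset. However, the explicit distance table you write down is not a metric.

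\textbf{The gap.} The filler $d(s_i,u_k)=\alpha L+2L$ breaks the triangle inequality in exactly the $SUT$ case you called delicate. If $B_{k,j}=C_{i,j}=0$, then $d(s_i,t_j)+d(t_j,u_k)=\alpha L+L<\alpha L+2L=d(s_i,u_k)$. Your claim that $\alpha L+2L$ lies in a range with upper end $\alpha L+L$ is false on its face. Doing the check properly for arbitrary inputs gives three constraints:
\begin{itemize}
\item $d(s_i,t_j)\le d(s_i,u_k)+d(u_k,t_j)$ forces $d(s_i,u_k)\ge\alpha(L+W)-L$, not $\alpha(L+W)-(L+W)$;
\item $d(s_i,u_k)\le d(s_i,t_j)+d(t_j,u_k)$ forces $d(s_i,u_k)\le(\alpha+1)L$;
\item the $SSU$ triangle with your $d(s_i,s_{i'})=2\alpha L$ forces $d(s_i,u_k)\ge\alpha L$, which is missing from your range.
\end{itemize}

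\textbf{The fix.} Any constant in $[\alpha L,(\alpha+1)L]$ works once $L\ge\alpha W$ and $L>0$ (the latter matters when $W=0$); for example, take $d(s_i,u_k)=(\alpha+1)L$. With that value, all ten triangle types go through with your other fillers. The rest of your argument is fine: discarding reported pairs whose second point lies outside $T$, handling empty rows of $A$, and clamping $C$. As submitted, though, ``adjust the filler constants if something fails'' is standing in for the one check that actually fails.

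\textbf{Comparison with the paper.} Once repaired, your route differs from the paper's only in how metricity is secured. The paper takes $d$ to be the shortest-path metric of a graph with only two kinds of edges: $U$--$T$ edges of weight $W+B_{k,j}$ and $S$--$T$ edges of weight $\alpha(W+C_{i,j})$, where $W>\alpha\max C+\max B$. The triangle inequality is then automatic, and one only checks that no detour beats a direct edge.

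The cost of that approach is that the remaining distances stay implicit, and computing them is itself a min-plus-type problem. For example, $d(s_i,s_{i'})=\alpha\bigl(2W+\min_j(C_{i,j}+C_{i',j})\bigr)$. The paper's $O(n^2)$ overhead does not account for this if the oracle may query such distances. Your explicit table produces every distance in $O(n^2)$ time, so it justifies the stated overhead more directly. The price is the case analysis, which is exactly where your slip occurred.
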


\begin{proof}
We begin with the reduction from \(\alpha\)-navigability testing to Boolean min-plus threshold testing.

Let \(G = (P,E)\) be a directed graph on \(P = \{v_1,\dots,v_n\}\), and let \(d\) be a metric on \(P\).
Define matrices \(A \in \{0,1\}^{n \times n}\) and \(B,C \in \mathbb{R}_{\ge 0}^{n \times n}\) by
\[
A_{i,k} = 1 \iff (v_i,v_k) \in E,
\qquad
B_{k,j} = d(v_k,v_j),
\qquad
C_{i,j} = \frac{1}{\alpha} d(v_i,v_j).
\]
Then, for every \(i,j\), we have \(\min \{ B_{k,j} : A_{i,k} = 1 \} \le C_{i,j}\) if and only if there exists \(k\) such that
\[
(v_i,v_k) \in E
\qquad\text{and}\qquad
d(v_k,v_j) \le \frac{1}{\alpha} d(v_i,v_j),
\]
which is exactly \(\alpha\)-navigability constraint for $(v_i,v_j)$.

For the converse direction, let \(A \in \{0,1\}^{n \times n}\) and \(B,C \in \mathbb{R}_{\ge 0}^{n \times n}\) be an instance of Boolean min-plus threshold testing. Let \(S = \{s_1,\dots,s_n\}\), \(U = \{u_1,\dots,u_n\}\), and \(T = \{t_1,\dots,t_n\}\), and set \(P = S \uplus U \uplus T\).

Define a directed graph \(G = (P,E)\) by: \((s_i,u_k) \in E\) if and only if \(A_{i,k} = 1\).

We now define a metric on \(P\). Let \(W > \alpha \max_{i,j} C_{i,j} + \max_{k,j} B_{k,j}\).
Let \(H\) be the undirected weighted graph on \(P\) containing the following edges:
\begin{itemize}
    \item for every \(k,j \in [n]\), an edge \(\{u_k,t_j\}\) of weight \(W + B_{k,j}\);
    \item for every \(i,j \in [n]\), an edge \(\{s_i,t_j\}\) of weight \(\alpha \cdot (W +  C_{i,j})\).
\end{itemize}
Let \(d\) be the shortest-path metric of \(H\).

We claim that, for all \(k,j\), \(d(u_k,t_j) = W + B_{k,j}\), and, for all \(i,j\), \(d(s_i,t_j) = \alpha W + \alpha C_{i,j}\).
Indeed, each of these values is realized by a direct edge. Any alternative path from \(u_k\) to \(t_j\) uses at least two edges between \(U\) and \(T\) or two edges between \(S\) and \(T\) and therefore has weight at least \(2W>W + B_{k,j}\). Likewise, any alternative path from \(s_i\) to \(t_j\) has at least one edge connecting \(S\) and \(T\) and at least one edge connecting \(U\) and \(T\) and therefore has weight at least \(\alpha\cdot W + W > \alpha\cdot (W + C_{i,j})\).

Therefore, for every \(i,k,j\),
\[
d(u_k,t_j) \le \frac{1}{\alpha} d(s_i,t_j)
\iff
W + B_{k,j} \le W + C_{i,j}
\iff
B_{k,j} \le C_{i,j}.
\]
It follows that the \(\alpha\)-navigability oracle reports \((s_i,t_j)\) if and only if there exists \(k\) such that
\[
A_{i,k} = 1
\qquad\text{and}\qquad
B_{k,j} \le C_{i,j},
\]
which is exactly the threshold-testing condition.
\end{proof}

Combining \cref{lem:bmptt-equivalence,lem:threshold-to-nav,lem:boolean-min-plus-hard} yields the following corollary.

\begin{corollary}\label{cor:APSP-hardness}
For every constant \(\varepsilon>0\), there does not exist an \(O(n^{2.5-\varepsilon})\) algorithm for \(\alpha\)-navigability testing unless the APSP conjecture is falsified.
\end{corollary}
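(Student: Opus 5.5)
The plan is a contrapositive argument that chains the two equivalences just proved and then invokes \cref{lem:boolean-min-plus-hard}. Suppose, for contradiction, that for some constant \(\varepsilon>0\) there is an algorithm solving \(\alpha\)-navigability testing on \(n\)-point instances in time \(T(n)=O(n^{2.5-\varepsilon})\). I will convert it into an equally fast algorithm for Boolean min-plus product, which contradicts the APSP conjecture.

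First, apply the converse direction of \cref{lem:threshold-to-nav}. A Boolean min-plus threshold testing instance \((A,B,C)\) of size \(n\times n\) becomes an \(\alpha\)-navigability testing instance on \(3n\) points, and the translation takes \(O(n^2)\) time. This gives threshold testing in time \(T(3n)+O(n^2)=O(n^{2.5-\varepsilon})\), since \(3^{2.5-\varepsilon}\) is a constant and \(n^2\le n^{2.5-\varepsilon}\) whenever \(\varepsilon\le 1/2\). (We may assume \(\varepsilon\le 1/2\) without loss of generality, because a faster algorithm is in particular an \(O(n^{2})\) one.) One technical point deserves a check: the statement of \(\alpha\)-navigability testing requires \(d\) to be a metric. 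The construction in \cref{lem:threshold-to-nav} uses a shortest-path metric of a weighted graph, so this holds, with only the caveat that distinct points need positive distance. If some points are disconnected in \(H\), I would treat their distance as infinite or as a large finite constant; neither choice affects the reported pairs.

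Next, apply the first direction of \cref{lem:bmptt-equivalence}. Boolean min-plus product reduces to threshold testing using \(O(\log n)\) oracle calls plus \(\widetilde O(n^2)\) work. The work consists of the row-rank transformation, the parallel binary search, and recovering \(M\star N\) from \(R\star N\). The resulting running time is \(\widetilde O(n^{2.5-\varepsilon}+n^2)\), which is \(O(n^{2.5-\varepsilon/2})\) for large \(n\).

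Finally, \cref{lem:boolean-min-plus-hard} applied with \(\varepsilon/2\) says that no \(O(n^{2.5-\varepsilon/2})\) algorithm for Boolean min-plus product exists unless the APSP conjecture fails. This contradiction proves the corollary.

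The main obstacle is bookkeeping rather than a new idea. The polylogarithmic overhead from the binary search has to be absorbed by lowering the exponent slightly, which is why the contradiction uses \(\varepsilon/2\) instead of \(\varepsilon\). The instance-size blowup from \(n\) to \(3n\) points, and the requirement that the constructed distance function be a valid metric, also need to be confirmed, as noted above. The \(\alpha\) used in the reduction can be any fixed \(\alpha>1\), so the hardness holds for every such \(\alpha\).
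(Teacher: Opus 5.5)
Your proposal is correct and follows the paper's own one-line proof: chain the converse direction of \cref{lem:threshold-to-nav}, the first direction of \cref{lem:bmptt-equivalence}, and \cref{lem:boolean-min-plus-hard}; your $3n$-point blowup and the absorption of the $O(\log n)$ oracle calls via $\varepsilon/2$ are exactly the bookkeeping the paper leaves implicit. On your metric caveat, disconnection never arises, because $H$ contains every $S$--$T$ and $U$--$T$ edge; the real subtlety, inherited from the lemma, is that shortest-path distances such as $d(s_i,u_k)=(\alpha+1)W+\min_j(\alpha C_{i,j}+B_{k,j})$ are themselves a min-plus product, so to keep the translation $O(n^2)$ you should instead fix the remaining distances explicitly (e.g.\ $(\alpha+1)W$ on $S\times U$, $2\alpha W$ on $S\times S$, and $2W$ on $U\times U$ and $T\times T$), which still yields a metric.
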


\section{Acknowledgments}
This material is based upon work supported by the National Science Foundation under Award Nos. 2403235, 2504470, 2551614, 2551615, and 2551617. Any opinions, findings, and conclusions or recommendations expressed in this material are those of the authors and do not necessarily reflect the views of the National Science Foundation.

\bibliographystyle{plain}
\bibliography{references}
\end{document}